\definecolor{maroon}{cmyk}{0, 0.87, 0.68, 0.32}
\definecolor{Dandelion}{cmyk} {0 0.29 0.84 0} 
\definecolor{Apricot}{cmyk} {0, 0.32, 0.52, 0} 
\definecolor{BurntOrange}{cmyk}{0, 0.51, 1, 0} 
\definecolor{Bittersweet}{cmyk}{0, 0.75, 1, 0.24} 
\definecolor{RedOrange}{cmyk}{0, 0.77, 0.87, 0} 
\definecolor{Mahogany}{cmyk}{0 0.85 0.87 0.35} 
\definecolor{Magenta}{cmyk}{0, 1, 0, 0} 
\definecolor{DarkOrchid}{cmyk}{.40, 0.80, 0.20, 0} 
\definecolor{Brown}{cmyk} {0 0.81 1. 0.60} 
\definecolor{Purple}{cmyk}{0.65 0.86 0 0} 
\definecolor{Plum}{cmyk}{0.50, 1., 0, 0} 
\definecolor{MidnightBlue}{cmyk}{0.98, 0.13, 0, 0.43} 
\definecolor{NavyBlue}{cmyk}{0.94, 0.54, 0, 0} 
\definecolor{RoyalBlue}{cmyk}{1. 0.50 0 0} 
\definecolor{Blue}{cmyk}{1. 1. 0 0} 
\definecolor{Cerulean}{cmyk}{0.94 0.11 0 0} 
\definecolor{ForestGreen}{cmyk}{0.91 0 0.88 0.12} 
\definecolor{PineGreen}{cmyk}{0.92, 0, 0.59, 0.25} 
\definecolor{OliveGreen}{cmyk}{0.64, 0, 0.95, 0.40} 
\definecolor{tablegreen}{RGB}{124, 179, 113}
\definecolor{GovernorBlue}{RGB}{54, 54, 180}
\newtheorem*{theorem*}{Theorem}
\theoremstyle{plain}
\newtheorem{lemma}{Lemma}[section]
\newtheorem{proposition}[lemma]{Proposition}
\newtheorem{theorem}[lemma]{Theorem}
\newtheorem{corollary}[lemma]{Corollary}
\numberwithin{equation}{section}
\theoremstyle{remark}
\theoremstyle{definition}
\newtheorem{remark}[lemma]{Remark}
\newtheorem{example}[lemma]{Example}
\newtheorem{definition}[lemma]{Definition}
\begin{document}
\title{Basic Strategy for Some Simplified Blackjack Variants}
\author{Basmalah Asad, Daniel Martin}

\date{\today}

\begin{abstract}
In this paper, we calculate a basic strategy for several variations of a simplified version of Blackjack.  In short, for these variants the player has only the two options of hit or stand, and they may only make either decision once. Other minor variations including rule modifications, changes in number of decks, alternate payout structures, and different given information about dealer's hand are also considered. An interesting theoretical result regarding the asymptotic behavior of the basic strategy and overall expectation as the number of decks increase is also proved. 
\end{abstract}

\maketitle

\section{Introduction}
\label{sec:introduction}
Blackjack is one of the most popular games in a casino due to its simplicity and the ability of the player to improve their winnings by learning basic strategy - a term used to refer to the collection of optimal decisions made in each betting scenario. The classical resource for the study of Blackjack is E.O. Thorp's 1966 book \textit{Beat the Dealer} \cite{Thorp1966}. In this book, Thorp gives a strategy for beating the house edge called ``counting cards", where the player tracks characteristics of the deck composition as play continues and uses this information to increase bet sizing in favorable situations. Combined with basic strategy, card counting gives the player a positive expected value, while basic strategy alone yields a small negative expectation of around $-0.005$ for a \$1 bet. From an academic point of view, the formulation of basic strategy for Blackjack is typically done through simulating a very large number of hands, and is formidable to calculate exactly by-hand - even a single hand layout.  In more recent years, J. Lin et al. in \cite{Lin2019} and Y. Mura and H. Wada in \cite{Mura2020} have explored how quantum computing can be used to develop a strategy that yields a higher expectation than the classical basic strategy.  In \cite{Baldwin1956}, R.R. Baldwin et al. attempt to give a mathematical proof for the basic strategy formulation, but ultimately make simplifying assumptions and fall short of full generality - calling their work a  ```practical" optimum strategy'. Although obtaining these results in full generality has not been done, there has been progress in calculating the exact expected values for splitting pairs in \cite{Nairn2019} by J. Nairn. One of the main reasons for the computational difficulty is the ability of both the player and dealer to make a sequence of actions.  The resulting calculations require several rounds of conditional probabilities.  This issue is greatly simplified in our variations, making the calculations more manageable by-hand.

Other authors have gone down similar paths as we have by studying simplified variants. In the book \emph{The Theory of Gambling and Statistical Logic} \cite{Epstein1967}, R.A. Epstein introduces a simplified blackjack variant, ``Grayjack", which uses a 13-card deck. The main goal of the game is to achieve a hand total of 13 instead of 21. In \cite{Ethier2019}, S.N. Ethier and J. Lee study a further simplified game ``Snackjack" by removing all cards except two aces, two twos, and four threes with the goal being to reach a total of 7. They also have modified the corresponding rules of the dealer. In addition to introducing an interesting toy model of the game, Ethier and Lee provide a basic strategy, as well as a system for card counting when multiple decks are introduced. Despite the development of their basic strategy in both cases, most of the by-hand calculations are a similar level of difficulty to that of traditional Blackjack. On the other hand, although our own derivation of basic strategy is ultimately done by a computer, it is not done through simulation and could theoretically be done by hand - albeit impractically. 

Our motivation behind studying these variations was to provide a setting in which a precise exploration of a simplified version of the game can be done, while maintaining the feel of the original game.  The derivation of optimal strategy is relatively straight-forward, can be done without approximation, and is computationally less demanding than traditional Blackjack.  Briefly, one needs to calculate the probability of winning, losing, and pushing (tying) for each decision and use these to calculate the expectation for each. Once we have these, a comparison of the expected values for each decision determines the correct action to take. We demonstrate a concrete example of this procedure in Example \ref{ex:stage1} and generally in Section \ref{sec:deckcomp}.

We have collected many interesting computational results in Section \ref{sec:computations} about several variations of the game, where we give the overall expected value and probabilities when we play according to the basic strategy. In Appendix \ref{appendix:tables}, we provide several tables that outline this strategy for these variations. Given these variations, most give the player a positive overall expectation, but we do find some that yield a negative expectation. In fact, we are able to provide a variation of our game that gives effectively the same expectation as traditional Blackjack.

Upon exploring the changing of deck composition - particularly increasing the number of decks - we were confronted with an interesting theoretical question. In particular, we first sought to study what happens to the basic strategy and overall expectation when we play the game with more decks. Upon revising our program to do this, we found that the overall expectations do go down when more decks are played with, but not by a significant amount. However, we wondered what the asymptotic behavior of these results tend to as the number of decks increase. The answer itself is not too far from the results we obtained playing with one deck, but led us to an interesting conjecture. Is playing the game with an infinite number of decks effectively the same as playing the game with one deck, but \textit{with replacement}? This conjecture is investigated in Section \ref{sec:deckcomp} and is proved in the affirmative.  We give the simplified statement here:

\begin{theorem*}\label{simplifiedmain}
	 As the number of decks approaches infinity, the basic strategy and overall expectation when playing our Blackjack variants with $n$ decks converge to the basic strategy and overall expectation when playing using one deck with replacement.
\end{theorem*}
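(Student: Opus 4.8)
The plan is to reduce each variant to a \emph{finite} decision problem that is ``the same'' for every deck count, and then to push the classical convergence of sampling-without-replacement to sampling-with-replacement through that finite structure. The key structural observation is that only boundedly many cards are ever revealed in a single hand: every card has positive point value and the dealer follows a fixed threshold rule, so the dealer draws a number of cards bounded by the standing threshold, and the player draws at most one extra card; hence the total number $K$ of cards revealed in a hand is bounded by a constant depending only on the rules, \emph{not} on the number of decks $n$. Consequently, for every $n$ the hand is described by a probability distribution $P_n$ supported on the same finite set of card sequences of length at most $K$.

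The analytic heart is then the elementary estimate that, for a fixed sequence $(c_1,\dots,c_k)$ with $k\le K$ and $n$ large enough for the sequence to be feasible, the probability of dealing it from an $n$-deck shoe is
\[
P_n(c_1,\dots,c_k)=\prod_{j=1}^{k}\frac{n\,m_{c_j}-a_j}{nN-(j-1)},
\]
where $N$ is the single-deck size, $m_c$ the single-deck multiplicity of card $c$, and $a_j$ is the number of indices $i<j$ with $c_i=c_j$. Each factor tends to $m_{c_j}/N$ as $n\to\infty$, so $P_n(c_1,\dots,c_k)\to\prod_{j=1}^{k}m_{c_j}/N=:P_\infty(c_1,\dots,c_k)$, which is precisely the probability of that sequence under the one-deck-with-replacement model; since the common support is finite, $P_n\to P_\infty$ in total variation.

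Next I would propagate this limit through the decision tree. The player acts at finitely many information sets, the same ones for all large $n$, each reachable with positive probability in the limit because its defining partial hand uses only cards present in the single deck. At each such information set the conditional expected payoffs $E_n[\text{hit}]$ and $E_n[\text{stand}]$ are finite sums of (conditional probability)$\times$(bounded payoff), so $E_n[\text{hit}]\to E_\infty[\text{hit}]$ and $E_n[\text{stand}]\to E_\infty[\text{stand}]$. Hence at every information set where $E_\infty[\text{hit}]\neq E_\infty[\text{stand}]$, for all $n$ large enough the $n$-deck basic strategy prescribes the same action as the one-deck-with-replacement basic strategy; with finitely many information sets, a single threshold $N_0$ works for all of them simultaneously. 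Averaging the payoff over $P_n$ while acting according to this eventually stabilized strategy then gives convergence of the overall expectation to that of the one-deck-with-replacement game under its basic strategy -- the whole chain being a composition of finitely many continuous (indeed piecewise-linear) operations in $P$.

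I expect the genuine work to lie not in any single estimate but in the bookkeeping that legitimizes the reduction: first, establishing the uniform-in-$n$ bound $K$ on the length of a hand, so that we face one and the same finite problem for every $n$; and second, handling the non-generic case in which $E_\infty[\text{hit}]=E_\infty[\text{stand}]$ at some information set, where the $n$-deck decision need not stabilize. The latter is harmless, since either action realizes the same limiting overall expectation, so one may adopt a matching tie-breaking convention (or simply note that the value is unaffected); with both points dispatched, the convergence of the basic strategy table and of the overall expectation both follow.
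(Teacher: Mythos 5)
Your proposal is correct, and its analytic core is the same as the paper's: every probability in sight is a ratio of polynomials in $n$ whose factors $\frac{n\,m_{c}-a}{nN-(j-1)}$ tend to $m_{c}/N$, i.e.\ sampling without replacement from $n$ decks converges to sampling with replacement. Where you differ is in the scaffolding. The paper mirrors its computer program in a three-stage conditional decomposition --- first conditioning on exact hand layouts (Proposition \ref{proposition:handlayout}), then on hand values via an explicit census of the $55$ distinct layouts in four multiplicity categories (Proposition \ref{proposition:handvalue}), then unconditioning (Proposition \ref{proposition:nocondition}, Theorem \ref{theorem:main}) --- verifying the limit of each ratio separately at each stage. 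You instead observe that the entire hand is a function of a single probability distribution on a finite set of bounded-length card sequences, prove total-variation convergence of that distribution once, and then invoke continuity of the finitely many (piecewise-linear) functionals built from it. Your packaging is cleaner and more general (it would transfer to traditional Blackjack modulo the uniform bound on hand length, which in these variants is trivially at most six cards since each side acts at most once), at the cost of hiding the explicit layout counts the paper needs anyway for its numerics. You also explicitly handle the case $E_\infty[\text{hit}]=E_\infty[\text{stand}]$, in which the $n$-deck decision need not stabilize; the paper's Corollary \ref{corollary:level2} asserts eventual stabilization of the strategy without addressing this degenerate possibility, so your tie-breaking remark closes a small gap in the paper's statement rather than introducing one.
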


It was not our original intention to glean any new information about traditional Blackjack, but given our proof, it seems reasonable to conjecture that this result is also true in the traditional setting. Interestingly, if it is true in the traditional Blackjack case, finding a practical way to play using one deck with replacement would completely remove the ability of the player to count cards. That is, since the deck composition is unchanged when playing with replacement, counting cards would yield no advantage. If the overall expectation of the player goes down, as it does in our variations, it would also provide the house with a larger edge.


\section*{Acknowledgments}
Support for this work was provided by The Women’s Advancement Initiative’s Dorothy Goodwin Scholars Program, which was made possible thanks to a generous bequest from Dorothy Goodwin. The Women’s Advancement Initiative at the University of Hartford is proud to continue the legacy of advancing each woman’s potential in the Hartford College for Women tradition. This support does not necessarily imply endorsement by the University of Hartford or The Women’s Advancement Initiative of research conclusions.


\section{Background}
\label{sec:background}
In this section, we clearly outline the rules of our simplified game, as well as how these rules change when we consider variations. To start, we remind the reader of the rules that are common between our variations and the traditional game, as well as highlight the main differences. First, each card in the deck has a rank, and it is this that determines the value of a card. More specifically, cards with numerical ranks (2 - 10) are worth their face value, face cards are valued at 10, and aces are worth the higher of 1 or 11 so long as the hand value is below 22. For every variation we consider, it is assumed that the dealer is playing against only one player. To initiate the game, the player places a bet, and two cards are dealt to the player and dealer, both of which are visible. In traditional Blackjack, only one of the two dealer cards is showing.

The key difference between our variations and the traditional game is that there are only two possible actions the player and dealer can take, and they may only take this action \textit{once}. The only actions are to take an additional card - \textit{hit} - or not take another another card - \textit{stand}. In traditional Blackjack, the player also has the options to \textit{double-down} and \textit{split} pairs. To double down the player doubles their bet size, but only receives one additional card, and if the player is dealt a pair of cards the same rank, they can split into two separate hands and bet on each. We have chosen to remove these options from our game for simplicity, but we could modify our variations to include it if desired. 

Assuming that both the player and dealer are not initially dealt hands valued at $21$, the player acts first with their decision to hit or stand. If the player's hand value exceeds 21, or \textit{busts}, the player loses the game and, consequently, the amount they bet without the dealer getting the opportunity to play their hand. Otherwise, if the player does not bust, the dealer then takes their action, which is completely determined by the following: hit on any hand whose value is less than 17 or on any soft 17, and stand on any other hand. We call a hand \textit{soft} when it contains an ace and \textit{hard} when it does not. If the dealer busts, then the player wins, and if neither bust, the winner is determined by who has the larger hand value. If the player wins in either case, they get their bet back and profit the size of their bet.

Some subtlety occurs in situations where the player or dealer is initially dealt a hand valued at $21$.  If the player is dealt a hand valued at $21$, two outcomes are possible. The hand is declared a \textit{push} (tie) if the dealer, upon discreetly checking their hand value, has a hand that is also valued at $21$.  If the dealer does not have a hand valued at $21$, the player wins $1.5$-times their bet, and the dealer does not have the opportunity to play their hand.  In this case, the player is said to have a \textit{natural}, or \textit{Blackjack}.  On the other hand, if only the dealer's hand is valued at $21$, the player does not get the opportunity to play their hand, and thus loses their bet.  \textit{Insurance} is an option for the player when the dealer is showing an ace to protect themselves against the situation that the dealer has Blackjack, but it is not offered in our version of the game.

In summary, the player can win in three ways:  player has a natural $21$ while the dealer does not, player has a hand total larger than the dealer's hand total while less than $22$, or the dealer busts while the player has a hand total less than $22$.  The player pushes when their total is less than $22$ and equal to the dealer's hand total, and loses in all other situations.

\subsection{Variations}
\label{sec:background-variations}

Next, we describe the variations that we will consider. There are three variations that produce the most significant changes in results and they pertain to the visibility of the dealer's starting hand. Within each of these three variations, we will also make smaller modifications to the rules, payouts, and deck-composition. 

In traditional Blackjack, only one of the two dealer's cards is visible. We refer to the visible card as the \textit{up-card}, and the non-visible card as the \textit{down-card}. Consequently, we refer to this as the \textit{one up-card} variation since only one of the dealer's cards is visible.  The other variations we will consider are \textit{two up-card} and \textit{no up-card}, where both dealer's cards are visible and neither dealer cards are visible, respectively.

Within each of the three variations above, we will also modify the deck composition. Modern traditional Blackjack uses several decks - up to 8 - when played at a casino, but our main variation will only be played with one deck.  However, we will analyze how the expected value changes as the number of decks that are being played with increases.  We refer to a variation as $n$-deck, when we use $n$ decks within each variation.

We will also consider modifications the rules - particularly for the dealer. In the variation we described above, the dealer hits on all hands valued less than 17, or on any soft 17. We will discuss how bending these rules changes the overall expectation and basic strategy, as well as which rule is most optimal for the dealer. 

Lastly, it is becoming more popular for casinos to offer what is referred to as ``6 to 5" Blackjack. This means that instead of winning \$3 for every \$2 bet, or 1.5-times the player's bet size, in the case of a natural, the player will only earn \$6 for every \$5 bet, or 1.2-times the player's bet size. As we will see in Section \ref{sec:computations}, this will significantly change the overall expectation of the player.


\section{Computational Results For Variations} 
\label{sec:computations}
In this section of the paper, we give our main computational results regarding the basic strategy, overall expectation, and probabilities of winning, losing and pushing for several variations of the game. In particular, we look at how these quantities change when we vary the known information about the dealer's hand, using different payout structures for naturals, and change rules regarding when the dealer hits.  In short, we observe that the expectation goes down as the amount of known information about the dealer's hand goes down, but they all remain positive. Additionally, the expectation goes down, naturally, when the compensation for a natural is modified and in fact produces a variation that yields a negative expectation. Lastly, we find the optimal rule that the dealer must play by, in terms of when to hit, that will provide the player with the smallest edge.

Before an overall computational analysis of these variations can be done, the basic strategy for each must be developed. In short, our approach to developing the basic strategy is to calculate the expected value when the optimal decision is made for each permutation of starting values for the player and dealer. To do this, we calculate the probability of winning, losing, and pushing when the player always hits, as well as when the player always stands. Given these probabilities, we compute the expected value for each decision and base the optimal decision on whichever expected value is larger. A complete example involving a specific hand layout for the player and dealer is shown in Appendix \ref{sec:derivation}, which will help elucidate the abstract computations in our proofs in Section \ref{sec:deckcomp}.  

In all of the variations in this section, we remind the reader that there is only one player and the dealer, and only one deck in use. The analysis of the dealer information variations when more decks are introduced in Section \ref{sec:deckcomp}.

\subsection{Dealer Information Variations}

In this subsection, we give the overall probabilities of winning, losing, and pushing, along with the expectation when we play according to the basic strategy for each of the three dealer information variations: two up-card, one up-card, and no up-card. We refer the reader to Appendix \ref{appendix:tables}, where we provide the basic strategy tables for each variation. These computational results are summarized in the following theorem:

\begin{theorem}\label{theorem:deckcomp}
	The overall probabilities of the player winning, losing, and pushing and expectation when each variation is played using one deck are summarized as follows: 
	\begin{table}[h!]
		\setlength{\arrayrulewidth}{0.5mm}
		\centering
		\begin{tabular}{|c|c|c|c|}
			\hline
			\cellcolor{RoyalBlue!20}\emph{\textbf{Variation:}}  & \cellcolor{RoyalBlue!20}\emph{\textbf{Two up-card}} & \cellcolor{RoyalBlue!20}\emph{\textbf{One up-card}} & \cellcolor{RoyalBlue!20}\emph{\textbf{No up-card}} \\ \hline
			\cellcolor{RoyalBlue!20}$P(W)$     & $0.470912 $   & $0.455694   $ & $0.454128   $\\ \hline
			\cellcolor{RoyalBlue!20}$P(L)$     & $0.446386 $   & $0.464616  $  & $0.468471  $ \\ \hline
			\cellcolor{RoyalBlue!20}$P(T)  $   & $0.082702 $   & $0.0797   $   & $0.077401 $  \\ \hline
			\cellcolor{RoyalBlue!20}$E{[}X{]} $& $0.047771$   & $0.014324$    & $0.008902$  \\ \hline
		\end{tabular}
	\end{table}\\
	When playing according to the basic strategy, the events $W,\,L,\,T$ are of the player winning, losing, and pushing, and the random variable $X$ is the payout when the player makes a $\$1$ bet.  
	\end{theorem}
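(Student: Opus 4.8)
The plan is to prove the theorem by exact enumeration, since each entry of the table is a finite sum of rational probabilities: it suffices to describe a deterministic (non-simulated) computation and to argue that it accounts for every case permitted by the rules of Section~\ref{sec:background}. I would record the state of a hand as a triple consisting of the multiset of the player's cards, the multiset of the dealer's cards, and the residual deck (stored as the vector of remaining counts of each of the thirteen ranks, since only rank---together with the ace's dual value---affects play). Because a single deck is in use, all draws are without replacement, so every transition simply decrements one rank count; this finiteness is what makes the recursions below terminate.

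The computational core is a subroutine for the dealer's terminal-total distribution. Given a dealer hand and a residual deck, if the dealer's total is at least $17$ and the hand is not a soft $17$ the dealer stands, and the distribution is a point mass at that total (or at ``bust''); otherwise the dealer draws each remaining rank with probability proportional to its count, updates hand and deck, and recurses. Against this distribution, any player standing total $t<22$ has well-defined conditional probabilities of winning, losing, and pushing. For the player's \emph{hit} option I would add one outer layer: the player draws each remaining rank, loses immediately if the new total exceeds $21$, and otherwise passes the updated total and deck to the dealer subroutine. Comparing $E[X\mid\text{hit}]$ with $E[X\mid\text{stand}]$---each a weighted sum of the payouts $+1$, $-1$, $0$, and, for an initial two-card $21$, $+\tfrac32$ or $+\tfrac65$---yields the basic-strategy action for that configuration, exactly as illustrated in Example~\ref{ex:stage1} and Appendix~\ref{sec:derivation}. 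The initial $21$ cases are handled before any action: if the player has a natural the round is a push or a $\tfrac32$-win according to whether the dealer also has $21$, and if only the dealer has $21$ the player loses without acting.

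The three variations differ only in the information on which the player bases the action, so the remaining work is to model that information correctly---and, crucially, to condition on the dealer not having a natural, since the player only ever acts in that branch. In the two up-card variation the dealer hand is fully known, so both the basic-strategy table and the per-configuration outcomes are computed with it fixed. In the no up-card variation the player sees neither dealer card, so for each player hand I would average the hit and stand expectations over all non-natural dealer two-card hands weighted by the residual deck, choose the maximizing action, and then average the resulting outcome probabilities. The one up-card variation is the delicate case: the down-card must be integrated out, but because the dealer discreetly checks for $21$, whenever the up-card is a ten-value card or an ace the branches in which the dealer holds a natural are already resolved (the player loses, unless the player also has $21$), and the player's decision in the surviving branches is made conditioning the down-card on not completing a natural. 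With the action pinned down for every visible configuration, the reported $P(W)$, $P(L)$, $P(T)$, and $E[X]$ follow by summing the per-configuration results against the probability of each initial four-card deal.

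I expect the principal obstacle to be bookkeeping rather than any isolated deep step: threading sampling-without-replacement correctly through the nested player-draw and dealer-draw recursions, tracking soft versus hard totals and the ace's value at every stage, and---most of all---faithfully modeling the dealer's peek in the one up-card variation so that the conditioning used to select the player's action matches the information actually available when the player acts. Assembling these ingredients consistently is precisely what upgrades the computation from a simulation to a proof; the arithmetic it then performs, though voluminous, is entirely routine.
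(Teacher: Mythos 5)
Your overall architecture matches the paper's: an exact (non-simulated) enumeration organized by conditioning first on full hand layouts, then on hand values (where the hit/stand comparison fixes the basic strategy), and finally averaging over all initial deals --- this is precisely the three-stage procedure of Appendix~\ref{sec:derivation}. However, there is one genuine modeling error that would make every entry of the table come out wrong. Your dealer subroutine ``draws each remaining rank\dots updates hand and deck, and \emph{recurses}'' until the total reaches $17$; that is the traditional Blackjack dealer. In the variants of this paper the defining simplification is that \emph{both} the player and the dealer may take their action at most once (Section~\ref{sec:background}): the dealer hits exactly one card if their two-card total is below $17$ or a soft $17$, and then must stand even if the resulting total is still below $17$. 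Example~\ref{ex:stage1} makes this concrete: with the dealer holding $(6,8)=14$ and the player standing on $13$, the paper counts \emph{every} hit-card valued $7$ or lower as a player loss --- including an ace or a deuce, which leave the dealer on $15$ or $16$. Under your recursion the dealer would draw again from $15$ or $16$ and could bust, so your terminal-total distribution, and hence $P(W)$, $P(L)$, $P(T)$, and $E[X]$, would differ from the stated values for every dealer starting total below $17$.

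A second, smaller point of divergence: you condition the player's decision in the one up-card variation on the down-card not completing a dealer natural (the ``peek''). The paper's text only describes the peek resolving the hand before the player acts when a natural is actually present; whether the surviving branches are reweighted by that conditioning when the strategy table is built is not something the paper spells out, so you should flag this as a modeling choice that must be pinned down to reproduce the one up-card column exactly. With the one-hit dealer rule corrected and that convention fixed, your enumeration scheme would indeed constitute a proof of the theorem in the same sense the paper intends.
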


Given the above results, we take a moment to compare across the variations. Our first interesting result is that they all yield positive expectation for the player.  That is, given perfect decision-making, the player can beat the game long-term. Naturally, we see that when more information about the dealer's hand is known, the expectation is at its largest, and as we lose information the expectation goes down. It is also interesting to note that only in the two up-card variation, we are actually more likely to win than we are to lose. Despite this, the expectation is still positive in all due to expected value contributions coming from when the player is dealt a natural. We also see that all of the above variations have a higher expectation than traditional Blackjack, which has a negative expectation when played optimally and without counting cards.

\subsection{Payout Variations - ``6 to 5" Blackjack}

In this subsection, we collect the computational results regarding the variations involving alternate payout structures.  In particular, the following theorem gives the overall results when naturals are paid out ``6 to 5" instead of ``3 to 2". That is, the player now only earns 1.2x their bet rather than 1.5x. It is becoming increasingly more common for casinos to offer this particular variation of Blackjack.  Because this variation only affects the payouts for naturals, where no decision is to be made, the basic strategy tables for this variation are the same as in the 3 to 2 blackjack cases. We refer to the tables in Appendix \ref{appendix:tables}.

\begin{theorem}\label{theorem:65}
	The following table summarizes the overall probabilities and expectations when each variation is played using one deck, and also assuming a ``6 to 5" payout structure for naturals:
	\begin{table}[h!]
		\setlength{\arrayrulewidth}{0.5mm}
		\begin{tabular}{|c|c|c|c|}
			\hline
			\cellcolor{RoyalBlue!20}\emph{\textbf{Variation:}} & \cellcolor{RoyalBlue!20}\emph{\textbf{Two up-card}} & \cellcolor{RoyalBlue!20}\emph{\textbf{One up-card}} & \cellcolor{RoyalBlue!20}\emph{\textbf{No up-card}} \\ \hline
			\cellcolor{RoyalBlue!20}$P(W)$  & $0.470912$   & $0.455694$  & $0.454128$   \\ \hline
			\cellcolor{RoyalBlue!20}$P(L)$  & $0.446386$   & $0.464616$  & $0.468472$    \\ \hline
			\cellcolor{RoyalBlue!20}$P(T)$  & $0.082702$   & $0.07797$   & $0.0774$ \\ \hline
			\cellcolor{RoyalBlue!20}$E[X]$  & $0.033824$   & $0.003761$  & $-0.005046$  \\ \hline
		\end{tabular}
	\end{table}\\
	When playing according to the basic strategy, the events $W,\,L,\,T$ are of the player winning, losing, and pushing and the random variable $X$ is the payout when the player makes a $\$1$ bet.  
\end{theorem}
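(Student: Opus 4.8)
The plan is to derive Theorem~\ref{theorem:65} from Theorem~\ref{theorem:deckcomp} rather than redoing the full computation from scratch. The essential point is that replacing a ``$3$ to $2$'' payout by a ``$6$ to $5$'' payout changes the payoff random variable only on the event that the player is dealt a natural while the dealer is not, and on that event the hand ends at once with no hit/stand choice ever made by either party. So the first thing I would check is that the optimal basic strategy is unaffected: the player reaches a decision state only when neither hand is an initial $21$, and from any such state the expected payoffs of ``hit'' and of ``stand'' are weighted averages (with the same deck-composition probabilities as in the ``$3$ to $2$'' analysis) of terminal payoffs, none of which is a player-natural payoff; hence both expectations — and therefore the inequality that selects the optimal action — coincide with their ``$3$ to $2$'' values. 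Since both the strategy and the distribution of the deal are unchanged, the induced law of $(W,L,T)$ is unchanged, so the $P(W)$, $P(L)$, and $P(T)$ rows of the table follow immediately from Theorem~\ref{theorem:deckcomp}.

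For the last row, let $N$ be the event that the player's two dealt cards total $21$ and $D$ the event that the dealer's two dealt cards total $21$, and split $X = X\mathbf{1}_{N^{c}} + X\mathbf{1}_{N}$. On $N^{c}$ the payoff is the same function of the play in both games, so $\mathbb{E}[X\mathbf{1}_{N^{c}}]$ is unchanged; on $N$ the payoff is $0$ on $N\cap D$ and $3/2$ (resp.\ $6/5$) on $N\cap D^{c}$. Therefore
\[
\mathbb{E}[X]_{6\text{:}5} \;=\; \mathbb{E}[X]_{3\text{:}2} \;-\; \tfrac{3}{10}\,\mathbb{P}(N\cap D^{c}).
\]
I would then evaluate $\mathbb{P}(N\cap D^{c})$ for a single $52$-card deck: there are $4$ aces and $16$ ten-valued cards, so $\mathbb{P}(N) = 4\cdot 16 / \binom{52}{2}$, and conditionally on $N$ the remaining $50$ cards contain $3$ aces and $15$ ten-valued cards, so $\mathbb{P}(D^{c}\mid N) = 1 - 3\cdot 15/\binom{50}{2}$; multiplying gives $\mathbb{P}(N\cap D^{c})$, and $\tfrac{3}{10}\mathbb{P}(N\cap D^{c})$ is the downward shift to apply to each entry in the last row of Theorem~\ref{theorem:deckcomp}. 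Observe that this shift depends only on the initial four-card deal and not on how much of the dealer's hand is visible, which provides a quick internal consistency check across the three columns.

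The main difficulty is not any single computation but making the reduction in the first paragraph fully rigorous inside whatever decision-theoretic framework Section~\ref{sec:deckcomp} and Appendix~\ref{sec:derivation} set up: precisely, the statement that modifying terminal payoffs on an event at which no action is ever taken leaves every optimal decision — and hence the entire induced outcome distribution — intact. Once that lemma is stated cleanly, the rest is exactly the bookkeeping already carried out for Theorem~\ref{theorem:deckcomp} together with the one-line combinatorial evaluation of $\mathbb{P}(N\cap D^{c})$, so no new heavy computation is required.
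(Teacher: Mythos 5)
Your argument is correct, but it reaches the result by a genuinely different route from the paper. The paper obtains Theorem~\ref{theorem:65} by re-running the full stage-1/2/3 computation of Appendix~\ref{sec:derivation} with the modified natural payout, and only remarks in passing that the basic strategy is unchanged because naturals involve no decision. You promote that remark to the engine of a complete reduction to Theorem~\ref{theorem:deckcomp}: a two-card $21$ is necessarily an ace plus a ten-valued card, the hand ends immediately on the event $N$, and any $21$ reached after a hit is paid even money, so no $3$-to-$2$ (or $6$-to-$5$) payoff can appear in any conditional expectation evaluated at a decision node; hence the strategy and the law of $(W,L,T)$ are untouched and the expectation drops by exactly $\tfrac{3}{10}P(N\cap D^{c}) = \tfrac{3}{10}\cdot\tfrac{64}{1326}\cdot\bigl(1-\tfrac{45}{1225}\bigr) \approx 0.013948$ in every column. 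What this buys, beyond avoiding a recomputation, is a cross-column consistency check that the brute-force approach does not provide: it confirms the two up-card entry ($0.047771 - 0.013948 = 0.033823$) and the no up-card entry ($0.008902 - 0.013948 = -0.005046$), but it flags the one up-card entry, since $0.014324 - 0.013948 = 0.000376$ rather than the stated $0.003761$, which therefore appears to be a misplaced decimal. Your argument likewise forces the $P(W)$, $P(L)$, $P(T)$ rows to be literally identical to those of Theorem~\ref{theorem:deckcomp}, so the small discrepancies between the two tables (e.g.\ $0.0797$ versus $0.07797$ for the one up-card push probability) must also be typographical rather than substantive.
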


When we compare these expectations to our results from Theorem \ref{theorem:deckcomp}, we see that this payout modification yields noticeable changes in expectation to each of our three variations. This is the first instance where the expectation for one of our variations becomes negative - in the no up-card variation. In fact, this expectation is extremely close - equal to the thousandths decimal place - to the that of traditional Blackjack.

\subsection{Dealer Rule Changes}

For the last of our variations, we will modify the rules the dealer plays by when it comes to when they hit. In particular, our interest is in how the expectation changes as a function of these rule changes. We remind the reader that in traditional Blackjack and the variations we have studied in this work so far, the dealer typically hits on any hand valued less than 17 and on any soft 17. This rule is commonly referred to as \textit{H17}. When the dealer hits on any hand valued less than 17 and stands on soft 17, it is referred to as \textit{S17}. In what follows, we will be changing the value that the dealer is required to hit on and examining how this changes the overall expectation.

\begin{theorem}\label{theorem:HS}
	The entries of the following table give the overall expectation in each variation across several restrictions on when the dealer must hit:
	\begin{table}[h!]
		\setlength{\arrayrulewidth}{0.5mm}
		\centering
		\begin{tabular}{|c|c|c|c|c|c|c|c|c|}
			\hline
			\cellcolor{RoyalBlue!20}\emph{\textbf{Dealer Rule:} }& \cellcolor{RoyalBlue!20}\emph{\textbf{S15}}   & \cellcolor{RoyalBlue!20}\emph{\textbf{H15}}   & \cellcolor{RoyalBlue!20}\emph{\textbf{S16}}  & \cellcolor{RoyalBlue!20}\emph{\textbf{H16}}   & \cellcolor{RoyalBlue!20}\emph{\textbf{S17}}    & \cellcolor{RoyalBlue!20}\emph{\textbf{H17}}   & \cellcolor{RoyalBlue!20}\emph{\textbf{S18}}    & \cellcolor{RoyalBlue!20}\emph{\textbf{H18}}   \\ \hline
			\cellcolor{RoyalBlue!20}\emph{\textbf{\textbf{Two up-card}}} &$  0.05718  $&$  0.05337 $ &$  0.04428  $&$  0.04217 $ & $ 0.04858 $ &$  0.04777  $& $ 0.07095 $ & $ 0.07202 $ \\ \hline
			\cellcolor{RoyalBlue!20}\emph{\textbf{One up-card}} & $ 0.03038 $  & $ 0.02663 $ &$  0.01523 $ & $ 0.01365  $& $ 0.01421 $ &$  0.01432 $ & $ 0.03493 $ & $ 0.03724 $ \\ \hline
			\cellcolor{RoyalBlue!20}\emph{\textbf{No up-card} } &$  0.02504 $ & $ 0.0213 $ & $ 0.00816 $ & $ 0.0061  $& $ 0.00871 $ &$  0.0089  $& $ 0.03183 $ &$  0.03413 $ \\ \hline
		\end{tabular}
	\end{table}\\
In the above table, H refers to the rule that the dealer must hit on soft hands valued at the given number, and S indicates that they must stand on soft hands valued at that number. In either case, they must hit on all hands valued less than the given number and stand on all valued at and above.
\end{theorem}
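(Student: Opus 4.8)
The proof of Theorem~\ref{theorem:HS} is computational in nature, following the scheme used to establish Theorems~\ref{theorem:deckcomp} and~\ref{theorem:65} and illustrated concretely in Appendix~\ref{sec:derivation}. For each of the three dealer-information variations and each of the eight dealer hitting rules, the plan is to first recompute the basic strategy---which genuinely changes with the rule, since altering the dealer's threshold changes the comparison between hitting and standing---and then evaluate the resulting overall expectation. The organizing observation is that, since a single deck is in play and all cards are dealt without replacement, every probability in sight is a finite sum of products of hypergeometric probabilities and can be evaluated exactly; the theorem is then obtained by running this evaluation for all $3 \times 8$ combinations.

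First I would set up a dealer subroutine. Fix a multiset $D$ of remaining cards and a dealer hand $h$; using the rule under consideration---hit on every hard total below the stated threshold, and on the soft total at the threshold hit or stand according to whether the rule is labeled $H$ or $S$---I would recursively compute the distribution of the dealer's final total over $\{17,18,19,20,21,\text{bust}\}$. If $h$ is already a standing hand, return the point mass at its total; otherwise condition on the next card drawn from $D$, update the running total (an ace counting as $11$ unless that busts, in which case $1$), delete that card from $D$, and recurse. This is the one place a genuine sequence of actions occurs, and it terminates since each draw strictly increases the hard total.

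Next I would handle the player's two permitted actions for a fixed visible configuration. For \emph{stand} with player total $t$, the probabilities of $W$, $L$, $T$ read off directly from the dealer distribution above, with the convention of Section~\ref{sec:background} that a non-natural $21$ still only beats a strictly smaller dealer total and pushes an equal one. For \emph{hit}, condition on the player's single drawn card: an immediate loss if the player busts, otherwise update $D$ and invoke the dealer subroutine at the player's new total. From these I form $E[X \mid \text{stand}]$ and $E[X \mid \text{hit}]$ using the payouts of Section~\ref{sec:background}, and the basic-strategy entry is whichever action has larger expectation, ties being immaterial to $E[X]$. The three variations differ only in which dealer cards the player may condition on, so the decision must be fixed per \emph{information set}: the player sees both dealer cards (two up-card), the up-card only and averages over the down-card (one up-card), or neither and averages over both (no up-card), in each case conditioning on the dealer's discreet blackjack check having failed when the up-card allows one. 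Finally I would aggregate over all deals of two cards to the player and two to the dealer from a fresh deck, weighting by the probability of that deal and short-circuiting on a player or dealer natural as described in Section~\ref{sec:background}; this produces the reported expectations.

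The main obstacle is bookkeeping rather than any isolated difficulty: one must track deck depletion scrupulously along the path from the player's draws to the dealer's draws, handle the soft/hard distinction and ace revaluation at every step, and---most delicately---account for the dealer's discreet blackjack check, which deletes certain branches from the sample space in the one up-card and no up-card variations and hence alters the conditional distributions the player faces. A related subtlety is that in those two variations the optimal action is forced to be constant across all down-card completions consistent with the player's information, so the comparison of expected payouts must be carried out at the level of the information set, not the full hand; conflating the two would yield a spuriously large expectation.
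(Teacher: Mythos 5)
Your overall computational scheme --- exact hypergeometric evaluation of win/loss/push probabilities, recomputation of the basic strategy for each dealer rule, comparison of $E[X\mid\text{hit}]$ and $E[X\mid\text{stand}]$ at the level of the player's information set, and aggregation over all initial deals --- is the same three-stage procedure the paper uses (Appendix~\ref{sec:derivation}), and your observations about the blackjack check and about fixing the decision per information set in the one up-card and no up-card variations are sound. However, there is one concrete error that would change every number in the table: your dealer subroutine is wrong for this game. In these variants the dealer, like the player, may take their action only \emph{once} (Section~\ref{sec:background}); the hitting rule governs whether the dealer draws a single card based on their initial two-card total, after which they must stand regardless of the result. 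Your proposal instead recurses --- ``condition on the next card drawn, \dots and recurse \dots it terminates since each draw strictly increases the hard total'' --- which is the traditional repeated-hitting dealer that the paper's simplification deliberately removes. In particular the dealer's final total is \emph{not} supported on $\{17,\dots,21,\text{bust}\}$: in Example~\ref{ex:stage1} the dealer holds a hard $14$, and a hit-card of $2$ leaves them standing on $16$, which beats the player's $13$. Under your recursive dealer that branch would continue hitting, so your $3\times 8$ evaluation would produce a different table than the one claimed.

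A second, smaller point: since the single permitted dealer hit means the dealer can finish on low totals, the case analysis for the player's stand decision must compare against the dealer's post-hit total directly rather than against a ``standing distribution''; this is exactly why the rule threshold (S15 through H18) affects only whether the dealer draws at all, not how far they draw, and it is what makes the by-hand computation tractable in the first place.
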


As we can see from the table in Theorem \ref{theorem:HS}, the player has the lowest expectation in all variations when the dealer must hit on any hand valued less than 16 and on any soft 16. All expectations to the right and left of this column increase, which means this is where the dealer is playing its own optimal strategy. Our program also indicates that the expectation goes all the way up to 0.273143 when the dealer always hits, and 0.289466 when the dealer always stands in the no up-card variation, for example. Note that the column for S17 is exactly as is given in the table in Theorem \ref{theorem:deckcomp}.


\section{Theoretical Results Regarding Deck Composition Variations} 
\label{sec:deckcomp}

In Section \ref{sec:computations}, we have considered variations involving changes to the known information about the dealer's starting hand, but all of these variations assumed that the game is being played with one deck. We discuss here the asymptotic behavior of the basic strategy, overall expectation, and probabilities of winning, pushing, and losing as the number of decks increases. In Subsection \ref{sec:longtermbehavior} below, we prove that the asymptotic behavior of the two up-card variation as the number of decks increases is that of using one deck, but with replacement. In Subsection \ref{sec:1dwr}, we summarize the overall probabilities and expectation from modifying our program in each of the variations to play using one deck with replacement.

\subsection{Asymptotic Behavior of the Two Up-card Variation}
\label{sec:longtermbehavior}
Our first realization is that the optimal strategy and overall expectation does not appear to change as significantly as we expected with the addition of more decks. With this in mind, we shifted our focus to studying the asymptotic behavior of these results. Our main results in this section prove the convergence of the limits of the overall probabilities and expectations from using $n$ decks as the number of decks approaches infinity - which we will see is effectively the same as playing the game using one deck, but \textit{with replacement}. 

Our approach to proving this will be similar to the path outlined in Section \ref{sec:derivation}. That is, we first calculate the probabilities and expectations when the initial \textit{hand layouts} are fixed as a function of the number of decks, $n$. Upon establishing these, we then calculate the analogous quantities when only the initial \textit{hand values} are fixed. This will allow us to generate the basic strategy for both variations. Lastly, we compute the overall probabilities and expectations when we play according to this basic strategy.

For definiteness, all of the forthcoming results are proved in the setting of the two up-card variation. However, the outline of the proof translates directly to the other two deck composition variations without much effort.  In fact, it would seem a reasonable conjecture that these limiting arguments would carry over to traditional Blackjack. 

We begin by defining the relevant probability measures and expectations subject to the different conditions that are known, as well as the random variables that correspond to the payouts.

\begin{definition}\label{def:measures}

Our experiment will be dealing two cards at random to the player and dealer from either a collection of $n$ decks or from 1 deck with replacement. We define the following conditional probability measures:
\begin{itemize}
	\item  $P''( \,\,\, \cdot \,\,\, ) = P(\,\,\,\cdot\,\,\, | \text{ initial hand layouts })$,  
	\item  $P'(\,\,\, \cdot\,\,\, ) = P(\,\,\,  \cdot\,\,\,  | \text{ initial hand values })$.
	\end{itemize}

Each of the conditional probability measures has a corresponding conditional expectation:
\begin{itemize}
	\item  $E''[ \,\,\, \cdot \,\,\, ] = E[\,\,\,\cdot\,\,\, | \text{ initial hand layouts }] $,  
	\item  $E'[\,\,\, \cdot\,\,\, ] = E[\,\,\,  \cdot\,\,\,  | \text{ initial hand values }]$.
\end{itemize}

	Additionally, we define the random variables that give the payout when we hit or stand:
\begin{itemize}
	\item  $X_{s,n}$: Payout when the player stands when playing using $n$ decks,
	\item  $X_{h,n}$: Payout when the player hits when playing using $n$ decks,
	\item  $X_{s,\infty}$:  Payout when the player stands when playing using 1 deck with replacement,
	\item  $X_{h,\infty}$:  Payout when the player hits when playing using 1 deck with replacement. 
	\end{itemize}

\end{definition}

We note that the prime notation in the above definition does \textit{not} indicate a derivative.

\begin{proposition}
\label{proposition:handlayout}
Suppose that the initial hand layouts for the player and dealer are fixed while playing the two up-card variation, and let $n$ be the number of decks. Then, as $n \to \infty$, we have the following:
\begin{itemize}
	\item $P''(W_{s,n}) \to P''(W_{s,\infty})$ and $P''(W_{h,n}) \to P''(W_{h,\infty})$
	\item $P''(T_{s,n}) \to P''(T_{s,\infty})$ and $P''(T_{h,n}) \to P''(T_{h,\infty})$
	\item $P''(L_{s,n}) \to P''(L_{s,\infty})$ and $P''(L_{h,n}) \to P''(L_{h,\infty})$
	\item $E''[X_{s,n}] \to E''[X_{s,\infty}]$ and $E''[X_{h,n}] \to E''[X_{h,\infty}]$.
\end{itemize}
where $W_{s,n}, \, T_{s,n}, \, L_{s,n}$ and $W_{s,\infty}, \, T_{s,\infty}, \, L_{s,\infty}$ are the events of the player winning, pushing, and losing when the player stands using $n$ decks and $1$ deck with replacement, respectively. Similarly, the subscript $h$ indicates the decision to hit.
\end{proposition}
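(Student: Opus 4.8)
The plan is to reduce the entire proposition to one elementary limit: the probability that the shoe reveals any \emph{fixed} finite sequence of ranks, after the four initial cards have been set aside, converges as $n \to \infty$ to the probability of that same sequence under sampling from one deck with replacement. To set this up, fix the player's and dealer's two cards; the remaining $n$-deck shoe then contains $52n - 4$ cards, with $4n - m_r$ of each rank $r$, where $m_r \in \{0,1,2,3,4\}$ counts the copies of rank $r$ in the fixed layout. A round is completed by revealing a sequence $r_1, \dots, r_k$ of further cards: when the player stands these are the dealer's successive hits, and when the player hits $r_1$ is the player's card and $r_2, \dots, r_k$ are the dealer's (the sequence being truncated as soon as a bust occurs). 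Since any non-busting hand totals at most $21$ and each card adds at least $1$ to the running total, both player and dealer must stop --- by rule or by busting --- after boundedly many cards, so there is a bound $K$, independent of $n$, on the length of every completing sequence.

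The key computation is then immediate. For a fixed sequence $r_1, \dots, r_k$ with $k \le K$, the chance the $n$-deck shoe reveals it in order is
\[
\prod_{i=1}^{k} \frac{4n - a_i}{\,52n - 4 - (i-1)\,},
\]
where $a_i$ is the number of cards of rank $r_i$ already removed (those in the layout, plus those among $r_1, \dots, r_{i-1}$). Each factor is a ratio of affine functions of $n$ with leading coefficients $4$ and $52$, so it tends to $4/52 = 1/13$; as $k$ is bounded, the product converges to $(1/13)^k$, which is precisely the probability of that sequence under one deck with replacement. Hence $P''(\cdot)$ converges, as $n \to \infty$, on every cylinder event determined by a bounded initial block of revealed cards.

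It remains to express the outcomes as such cylinders. For the stand decision, the dealer's play is deterministic given the house rule, so $W_{s,n}$, $T_{s,n}$, $L_{s,n}$ are each a finite disjoint union of cylinder events --- the completing sequences for which the resulting dealer total loses to, ties, or beats the player's fixed total (with dealer busts folded into the first case) --- and crucially this index set of sequences is literally the same for every $n$ and for the with-replacement model. Summing the finitely many convergent cylinder probabilities gives $P''(W_{s,n}) \to P''(W_{s,\infty})$, and likewise for $T$ and $L$; the hit decision is identical after prepending the player's card and placing all player-bust sequences into $L_{h,\cdot}$. Finally $X_{s,n}$ and $X_{h,n}$ take values in a fixed finite set (for a layout on which a decision is actually made, $\{-1,0,1\}$), so $E''[X_{s,n}] = \sum_x x\, P''(X_{s,n} = x) \to \sum_x x\, P''(X_{s,\infty} = x) = E''[X_{s,\infty}]$, and the same for hitting.

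The main obstacle here is organizational rather than analytic: one must define the sample space of round completions carefully enough that the uniform length bound $K$ is apparent and that each of $W$, $T$, $L$ and each level set of $X_{s,n}$, $X_{h,n}$ is exhibited as the \emph{same} finite union of cylinders across all $n$ and in the replacement model. With that bookkeeping in hand, the proposition drops out of the per-factor limit $\tfrac{4n - a_i}{52n - 4 - (i-1)} \to \tfrac{1}{13}$ and the finiteness of all the sums involved.
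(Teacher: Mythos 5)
Your proof is correct, and at its heart it rests on the same elementary fact as the paper's: each revealed card contributes a factor of the form $(4n-a)/(52n-b)$ with $a,b$ bounded independently of $n$, which tends to $4/52$. Where you differ is in the organization. The paper exploits the one-hit structure of this variant directly: it writes each of $P''(W_{s,n})$, $P''(T_{s,n})$, $P''(L_{s,n})$ as a single explicit ratio $(a\cdot 4n - w)/(52n-4)$ by counting winning/pushing/losing ranks, and in the hit case conditions on the rank $R_r$ of the player's hit-card to get a sum of products of two such ratios, then passes to the limit term by term (absorbing lower-order terms into an $\mathcal{O}(1/n)$). You instead isolate a single lemma --- convergence of the probability of any fixed rank-sequence of uniformly bounded length to its with-replacement value $(1/13)^k$ --- and exhibit each of $W$, $T$, $L$ and each level set of $X$ as the same finite union of such cylinders for all $n$. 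For this game the two decompositions nearly coincide (your sequences have length at most two), but your version buys genuine generality: the uniform length bound $K$ is the only game-specific input, so the argument transfers verbatim to variants where the player or dealer may hit repeatedly, including the traditional game the paper only conjectures about. The one point worth making explicit in your write-up is that for small $n$ some cylinders in your fixed index set are impossible (all copies of a needed rank already removed); your product formula correctly assigns them probability zero, so the index set can indeed be held fixed, but this deserves a sentence.
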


We re-iterate that all quantities in the above Proposition \ref{proposition:handlayout} are conditioned on the initial player and dealer hand layouts being known, and are defined in Definition \ref{def:measures}. We refer the reader to Appendix \ref{sec:derivation}, where they are calculated in a concrete example. 

\begin{proof}
	Given fixed initial hand layouts for the player and the dealer, assume first that the player chooses to stand. Then, when we use $n$ decks, 
	\begin{align*}
		P''(W_{s,n}) &= \dfrac{a\cdot 4n - w}{52n - 4} & a &\in  \{0,\,1,\,\dots,\,13\} & w &\in \{0,\,1,\,2,\,3,\,4\}, \\
		P''(T_{s,n}) &= \dfrac{b\cdot 4n - t}{52n - 4} & b &\in  \{0,\,1,\,\dots,\,13 \} & t &\in \{0,\,1,\,2,\,3,\,4\}, \\
		P''(L_{s,n}) &= \dfrac{c\cdot 4n - \ell}{52n - 4} & c &\in \{0,\,1,\,\dots,\,13\} & \ell &\in \{ 0,\,1,\,2,\,3,\,4\},
	\end{align*}
	where $a,\,b,\,c$ count the number of ranks that result in the player winning, pushing, and losing, respectively, and consequently, $a + b + c = 13$. Note that these $a,\,b,\,c$ are dependent on the initial hand layouts. The variable $w$ counts the number of cards that are no longer in the deck that would help the player win. That is, cards that help the player win that are already in the initial hands of the player or dealer. Similarly, $t$ and $\ell$ count the number of cards no longer in the deck that would help the player push and lose, respectively. Consequently, $w + t + \ell = 4$ since there are four cards removed from the deck - two for each player and dealer. 
	
	When we use 1 deck with replacement, we have
	\begin{align*}
		P''(W_{s,\infty}) &= \dfrac{a\cdot 4}{52} & a &\in \{ 0,\,1,\,\dots,\,13\}, \\
		P''(T_{s,\infty}) &= \dfrac{b\cdot 4}{52} & b &\in \{ 0,\,1,\,\dots,\,13\}, \\
		P''(L_{s,\infty}) &= \dfrac{c\cdot 4}{52} & c &\in \{ 0,\,1,\,\dots,\,13\},
	\end{align*}
	where $a,\,b,\,c$ are exactly the same as in the $n$-deck case since these coefficients are strictly based on the initial hand layouts. Thus, they also share the same relationship with respect to their sum. 
	
	Then, we have
	\begin{align*}
		\lim_{n \to \infty} P''(W_{s,n}) &= \lim_{n \to \infty} \dfrac{a\cdot 4n - w}{52n - 4} = \dfrac{a\cdot 4}{52} = P''(W_{s,\infty}),\\
		\lim_{n \to \infty} P''(T_{s,n}) &= \lim_{n \to \infty} \dfrac{b\cdot 4n - t}{52n - 4} = \dfrac{b\cdot 4}{52} = P''(T_{s,\infty}),\\
		\lim_{n \to \infty} P''(L_{s,n}) &= \lim_{n \to \infty} \dfrac{c\cdot 4n - \ell}{52n - 4} = \dfrac{c\cdot 4}{52} = P''(L_{s,\infty}),
	\end{align*}
	which establishes half of the statements.
	
	Now, assume that the player will hit. Then, when we use $n$ decks, 
	\begin{align*}
		P''(W_{h,n}) = \sum_{r = 1}^{11} P''(W_{h,n} \cap R_r) = \sum_{r=1}^{11} P''(W_{h,n} \vert R_r) \cdot P''(R_r) = \sum_{r=1}^{11} \dfrac{a_r\cdot 4n - w_r}{52n-5}\cdot\dfrac{4n-\gamma_r}{52n-4},
	\end{align*}
	where $R_r$ is the event that the value of player's hit-card is $r$. Here $a_r \in \{0,\,1,\,\dots,\,13\}$ is the number of ranks that will result in the player winning given that their hit-card value is $r$, $w_r \in \{0,1,\,2,\,3,\,4,\,5\}$ is the number of cards no longer in the deck what would result in the player winning given that their hit card value is $r$, and $\gamma_r \in \{0,\,1,\,2,\,3,\,4\}$ is the number of cards whose value is $r$ that are no longer in the deck. We emphasize that $a_r$ and $w_r$ are calculated assuming that the player has already received their hit-card. Furthermore,
	\[ P''(W_{h,n}) = \sum_{r=1}^{11} \dfrac{a_r\cdot 4n - w_r}{52n-5}\cdot\dfrac{4n-\gamma_r}{52n-4} =  \dfrac{A(4n)^2}{(52n-5)(52n-4)} + \mathcal{O}(1/n),\]
	where $A = \sum_r a_r$ and $\mathcal{O}(1/n)$ indicates a function that will converge to zero at the same rate as that of $\frac{1}{n}$ as $n \to \infty$. 
	
	Similarly,
	\begin{align*}
		P''(T_{h,n}) = \sum_{r=1}^{11} \dfrac{b_r\cdot 4n - t_r}{52n-5}\cdot\dfrac{4n-\gamma_r}{52n-4} =\dfrac{B(4n)^2}{(52n-5)(52n-4)} + \mathcal{O}(1/n),\\
		P''(L_{h,n}) = \sum_{r=1}^{11} \dfrac{c_r\cdot 4n - \ell_r}{52n-5}\cdot\dfrac{4n-\gamma_r}{52n-4} =\dfrac{C(4n)^2}{(52n-5)(52n-4)} + \mathcal{O}(1/n),
	\end{align*}
	where $B = \sum_r b_r$ and $C = \sum_r c_r$.  Additionally, we know $a_r + b_r + c_r = 13$ and $w_r + t_r + \ell_r = 5$ for each $r$, and $\sum_r \gamma_r = 4$. We also note that when the hit-card is an ace, we have one of $R_1$ or $R_{11}$ empty.  We also have that $\sum_r (a_r+b_r+c_r) = A+B+C = 169$.
	
	On the other hand, if we play using 1 deck with replacement, we have
	\begin{align*}
		P''(W_{h,\infty}) = \sum_{r = 1}^{11} \dfrac{a_r\cdot 4}{52} \cdot \dfrac{4}{52}  = A\bigg(\dfrac{4}{52}\bigg)^2,\\
		P''(T_{h,\infty}) = \sum_{r = 1}^{11} \dfrac{b_r\cdot 4}{52} \cdot \dfrac{4}{52} = B\bigg(\dfrac{4}{52}\bigg)^2,\\
		P''(L_{h,\infty}) = \sum_{r = 1}^{11} \dfrac{c_r\cdot 4}{52} \cdot \dfrac{4}{52} = C\bigg(\dfrac{4}{52}\bigg)^2.
	\end{align*}
	
	Finally, we have
	\begin{align*}
		\lim_{n \to \infty} P''(W_{h,n}) = \lim_{n \to \infty} \dfrac{A(4n)^2}{(52n-5)(52n-4)} + \mathcal{O}(1/n) = A\bigg(\dfrac{4}{52}\bigg)^2 = P''(W_{h,\infty}),\\
		\lim_{n \to \infty} P''(T_{h,n}) = \lim_{n \to \infty} \dfrac{B(4n)^2}{(52n-5)(52n-4)} + \mathcal{O}(1/n) = B\bigg(\dfrac{4}{52}\bigg)^2 = P''(T_{h,\infty}),\\
		\lim_{n \to \infty} P''(L_{h,n}) = \lim_{n \to \infty} \dfrac{C(4n)^2}{(52n-5)(52n-4)} + \mathcal{O}(1/n) = C\bigg(\dfrac{4}{52}\bigg)^2 = P''(L_{h,\infty}),
	\end{align*}
	which establishes the probabilities.
	
	Lastly,
	\begin{align*}
		E''[X_{s,n}] &= P''(W_{s,n}) - P''(L_{s,n}) \to P''(W_{s,\infty}) - P''(L_{s,\infty}) = E''[X_{s,\infty}],\\
		E''[X_{h,n}] &= P''(W_{h,n}) - P''(L_{h,n}) \to P''(W_{h,\infty}) - P''(L_{h,\infty}) = E''[X_{h,\infty}],
	\end{align*}
	establishes the result.	Note that if the player is dealt a natural, then the limits of the expectations still hold, but we would instead have $1.5\cdot P''(W_{\cdot,n}) - P''(L_{\cdot,n})\to 1.5\cdot P''(W_{\cdot,\infty}) - P''(L_{\cdot,\infty}) $.
\end{proof}

\begin{remark}
	Note that
	\begin{align*}
		P''(W_{s,\infty}) + P''(T_{s,\infty}) + P''(L_{s,\infty}) &= (a + b + c)\cdot\bigg(\dfrac{4}{52}\bigg) = 1,\\
		P''(W_{h,\infty}) + P''(T_{h,\infty}) + P''(L_{h,\infty}) &= (A + B + C)\cdot\bigg(\dfrac{4}{52}\bigg)^2 = 1.
	\end{align*}
	Similarly, 
	\begin{align*}
		P''(W_{s,n}) + P''(T_{s,n}) + P''(L_{s,n}) &= \dfrac{(a+b+c)\cdot4n - (w+t+l)}{52n - 4} =  1,\\
		P''(W_{h,n}) + P''(T_{h,n}) + P''(L_{h,n}) &= \sum_{r=1}^{11} \dfrac{\big((a_r+b_r+c_r)\cdot 4n - (w_r+t_r+l_r)\big)\cdot(4n - \gamma_r)}{(52n-5)(52n - 4)} = 1.
	\end{align*}
\end{remark}

Before we move on to proving the next result, we discuss the variety of distinct hand layouts that are possible when dealt two cards. Every hand layout fits into one of four categories. For each of these four categories of hand layout, we know both the number of hand layouts within each category, as well as how many permutations have that fixed hand layout.  We summarize this information below: for $n \ge 2$, we have
\begin{enumerate}
	\item Distinct card values, neither of which is valued at 10 (e.g., $(3,4)$, $(2,7)$, etc.): There are $36$ distinct hand layouts, and, within each, there are $2 \cdot 4n \cdot 4n = 32n^2$ permutations. 
	\item Non-distinct card values, neither of which is valued at 10 (e.g., $(3,3)$, $(4,4)$, etc.): There are $9$ distinct hand layouts, and, within each, there are $P(4n,2) = 16n^2 - 4n$ permutations.
	\item Distinct card values, exactly one of which is valued at 10 (e.g., $(3,10)$, $(4,10)$, etc.): There are $9$ distinct hand layouts, and, within each, there are $2\cdot 16n \cdot 4n = 128n^2$ permutations.
	\item Both cards are valued at 10 (e.g., $(J,K)$, $(10,Q)$, etc.): There is only one distinct hand layout with $P(16n,2) = 256n^2 - 16n$ permutations.
\end{enumerate}
That is, altogether, there are 55 distinct hand layouts with $P(52n,2)$ total permutations. For each fixed hand value, there exist constants $a,\,b,\,c,\,d$ such that the total number of permutations for that hand value is given by
\[ a\cdot 32n^2   + b\cdot (16n^2 - 4n) + c\cdot 128n^2 + d\cdot (256n^2 - 16n).\]
For example, for the hands valued at $14$, there are 
\[ 3\cdot 32n^2 + 1\cdot (16n^2-4n) + 1 \cdot 128n^2 + 0 \cdot (256n^2-16n). \]
That is, there are three layouts in category (1) - $\{ (A,3), \, (5,9),\, (6,8)\}$ - one layout in category (2) - $\{(7,7)\}$ - one layout in category (3) - $\{(4,10)\}$ - and no layouts in category (4). 

On the other hand, when we play using 1 deck with replacement, we have the following:
\begin{enumerate}
	\item Distinct card values, neither of which is valued at 10: There are $36$ distinct hand layouts, and, within each, there are $2\cdot 4 \cdot 4 = 32$ permutations. 
	\item Non-distinct card values, neither of which is valued at 10: There are $9$ distinct hand layouts, and, within each, there are $4\cdot 4 = 16$ permutations.
	\item Distinct card values, exactly one of which is valued at 10: There are $9$ distinct hand layouts, and, within each, there are $2\cdot 16 \cdot 4 = 128$ permutations.
	\item Both cards are valued at 10: There is only one distinct hand layout with $16\cdot 16 = 256$ permutations.
\end{enumerate}
That is, there remain the same 55 distinct hand layouts, but with 2704 total permutations.

Our goal now is to compute the probabilities of obtaining all permutations of hand layouts given a permutation of hand values for both $n$-deck and 1-deck with replacement variations. The general result follows:

\begin{proposition}
	\label{proposition:handvalue}
	Suppose that the initial hand values of the player and dealer are fixed while playing the two up-card variation and let $n$ be the number of decks. Let $i = 1,\,2,\,\dots,\, L$ and $j = 1,\,2,\,\dots,\,M$ be the number of distinct hand layouts for the player and dealer given their fixed starting totals, respectively.  Then for each $i$ and $j$, as $n \to \infty$, 
	\[ P'(E_{i,n} \cap F_{j,n}) \to P'(E_{i,\infty} \cap F_{j,\infty}), \]
	where $E_{i,\cdot}$ and $F_{j,\cdot}$ are events corresponding to distinct hand layouts for the player and dealer, respectively. 
\end{proposition}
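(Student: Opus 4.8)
The plan is to mirror the proof of Proposition~\ref{proposition:handlayout}: write each conditional probability as a ratio of polynomials in $n$ and read off the limit by comparing leading coefficients. Fix the player's and dealer's hand values, and let $E_1,\dots,E_L$ and $F_1,\dots,F_M$ enumerate the player and dealer layouts consistent with those values --- each is one of the $55$ two-card layouts organized into the four categories described just before the statement. Writing $G_n$ for the event that the dealt hand values are the prescribed ones, we have $E_{i,n}\cap F_{j,n}\subseteq G_n$, and since the ambient measure is uniform over the $P(52n,4)$ ordered deals of two cards to the player and two to the dealer,
\[
P'(E_{i,n}\cap F_{j,n})=P(E_{i,n}\cap F_{j,n}\mid G_n)=\frac{N_{ij}(n)}{D(n)},\qquad D(n)=\sum_{i'=1}^{L}\sum_{j'=1}^{M}N_{i'j'}(n),
\]
where $N_{ij}(n)$ counts the ordered $4$-card deals realizing the layout pair $(E_i,F_j)$. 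So the problem reduces to understanding the polynomial $N_{ij}(n)$.

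The second step is to show that $N_{ij}(n)$ is a polynomial in $n$ of degree exactly $4$ whose leading coefficient $n_{ij}$ is the number of ordered $4$-card deals realizing $(E_i,F_j)$ when one instead deals from a single deck \emph{with replacement}. The point is that $N_{ij}(n)$ is a finite sum --- over the admissible ways to assign the four required card values to the four slots (player's first card, player's second, dealer's first, dealer's second) --- of a product of four ``availability'' factors, each of the form $4n-k$ or $16n-k$ for a non-negative integer $k\le 3$: the base $16$ appears at a ten-valued slot, and $k$ merely records how many cards of that value were already placed in earlier slots, which is exactly what produces the $P(4n,2)$ and $P(16n,2)$ contributions in the preceding discussion together with the occasional extra $-1$ when the dealer needs a value the player has taken. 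Replacing every $4n-k$ by $4$ and every $16n-k$ by $16$ and stripping the common factor $n^4$ turns this sum verbatim into the with-replacement count, so $N_{ij}(n)=n_{ij}n^4+\mathcal{O}(n^3)$ with $n_{ij}>0$. I would also note that the ace convention (a two-card hand counting an ace as $11$, a pair of aces as $12$) affects only which layouts lie in the lists $E_\bullet$ and $F_\bullet$, not this counting, so it is harmless.

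Granting this, the rest is routine. Summing over layouts gives $D(n)=d\,n^4+\mathcal{O}(n^3)$ with $d=\sum_{i',j'}n_{i'j'}\ge n_{ij}>0$, the positivity coming from the fact that the prescribed hand values are attainable, and therefore
\[
P'(E_{i,n}\cap F_{j,n})=\frac{n_{ij}n^4+\mathcal{O}(n^3)}{d\,n^4+\mathcal{O}(n^3)}\xrightarrow[\;n\to\infty\;]{}\frac{n_{ij}}{d}.
\]
Running the same bookkeeping in the with-replacement model --- where $N_{ij}$ is the constant $n_{ij}$ and $D$ the constant $d$ --- identifies this limit as $P'(E_{i,\infty}\cap F_{j,\infty})$, which is the assertion.

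The main obstacle is the case analysis in the second step: one must check that $N_{ij}(n)$ genuinely factors slot-by-slot into linear-in-$n$ availability terms with the claimed leading behaviour, uniformly across all configurations of repeated values, ten-valued cards, and overlaps between the player's and dealer's required values. This is elementary but easy to bungle; I would organize it by first dealing the player's two cards via the four-category table verbatim and then dealing the dealer's two cards from the remaining $52n-2$, keeping track only of how many cards of each needed value survive, which makes manifest that every correction term is a bounded integer. The limit, the summation over layouts, and the positivity of $d$ are then immediate.
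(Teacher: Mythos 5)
Your proof is correct and rests on the same core idea as the paper's: write the conditional probability as a ratio of polynomials in $n$ whose leading coefficients are exactly the with-replacement counts, and compare leading terms. The only organizational difference is that the paper factors $P'(E_{i,n}\cap F_{j,n}) = P'(F_{j,n}\mid E_{i,n})\cdot P'(E_{i,n})$ and treats each factor as a ratio of quadratics in $n$ (with $\mathcal{O}(n)$ correction terms $h_{i,j,n}$ and $q_{i,n}$ recording the overlap between the player's and dealer's cards), whereas you count ordered four-card deals jointly and work with quartics; both reduce to the same slot-by-slot availability bookkeeping, and your version makes the identification of the limit with the independent with-replacement product $P'(E_{i,\infty})\,P'(F_{j,\infty})$ slightly more direct.
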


Note that for each $i$, $E_{i,n}$ corresponds to a distinct hand layout for the player and is a collection of every two card permutation that is of this layout. In what follows, we use the absolute value notation $|E_{i,n}|$ to denote the number of permutations that are of the same hand layout. All distinct hand layouts $E_{i,n}$ have the same hand value for all $i$. Also, the layouts $E_{i,n}$ and $E_{i,\infty}$ are the same for all $n$, and the same is true of $F_{j,n}$ and $F_{j,\infty}$. 
\begin{proof}
	Let the initial hand values of the player and dealer be fixed. Then for any $i \in \{1,\,2,\,\dots,\, L\}$ and $j \in \{1,\,2,\,\dots,\,M\}$, we have
	\[ P'(E_{i,n} \cap F_{j,n}) = P'(F_{j,n} | E_{i,n})\cdot P'(E_{i,n}) .\]
	
	First, 
	\begin{align} 
		\label{eq:peijn}
		P'(E_{i,n}) = \dfrac{|E_{i,n}|}{a\cdot 32n^2   + b\cdot (16n^2-4n) + c\cdot 128n^2 + d\cdot (256n^2 - 16n) }. 
	\end{align}
	Note that $|E_{i,n}| \in \{32n^2, \, 16n^2 - 4n,\,128n^2,\, 256n^2-16n\}$, where each function grows at the rate of $n^2$ as $n \to \infty$ and counts the number of permutations of the player's hand layout, $E_{i,n}$. Furthermore, the constants $a,\,b,\,c,\,d$ are determined by the player's starting hand value and the relationship $\sum_i |E_{i,n}| = a\cdot 32n^2   + b\cdot (16n^2 - 4n) + c\cdot 128n^2 + d\cdot (256n^2 - 16n)$ follows by this definition.
	
	Next, 
	\begin{align}
		\label{eq:pfjn}
		P'(F_{j,n} | E_{i,n}) = \dfrac{|F_{j,n}| - h_{i,j,n}}{w\cdot 32n^2  + x \cdot (16n^2 - 4n) + y\cdot 128n^2 + z\cdot (256n^2 - 16n) - q_{i,n} } ,
		\end{align}
	where $|F_{j,n}| - h_{i,j,n}$ is the number of permutations of the dealer's hand layout, $F_{j,n}$, minus the permutations of this layout that consist of cards from the player's hand, $h_{i,j,n}$. The constants $w,\,x,\,y,\,z,$ are determined only by the dealer's hand value. Here, $q_{i,n}$ is the number of hands with the dealer's starting value that have been removed from the deck due to the cards that make up player's hand. The quantities $q_{i,n}$ and $h_{i,j,n}$ are both functions grow at the rate of $n$ as $n \to \infty$ such that $\sum_{j} h_{i,j,n} = q_{i,n}$. This can be shown by calculating the number of possible permutations in each category of hand layout when $1$ and $2$ cards have been removed from consideration. Also, we have $\sum_{j} |F_{j,n}| =w\cdot 32n^2  + x \cdot(16n^2-4n) + y\cdot 128n^2 + z\cdot (256n^2-16n)$ by definition. 
	
	On the other hand, when we play using 1 deck with replacement, we have
	\[ P'(E_{i,\infty} \cap F_{j,\infty}) = P'(E_{i,\infty})\cdot P'(F_{j,\infty}). \]
	Then, we have
	\begin{align*}
		P'(E_{i,\infty}) &= \dfrac{E_{i}}{a \cdot 32 + b \cdot 16 + c \cdot 128 + d \cdot 256}, \\
		P'(F_{j,\infty}) &= \dfrac{F_{j}}{w \cdot 32 + x \cdot 16 + y \cdot 128 + z \cdot 256},
		\end{align*}
	where $a,\,b,\,c,\,d$ and $w,\,x,\,y,\,z,$ are defined as in equations (\ref{eq:peijn}) and (\ref{eq:pfjn}), respectively, and $E_i$ and $F_j$ are the number of permutations for the hand layouts corresponding to $E_{i,\infty}$ and $F_{j,\infty}$, respectively. In particular, $E_i, F_j \in \{32, \,16,\, 128, \,256\}$. Then, by definition we have $\sum_i E_i = 32a + 16b + 128c + 256d$ and $\sum_j F_j = 32w + 16x + 128y + 256z$. 
	
	Lastly, when we send $n \to \infty$ in equations (\ref{eq:peijn}) and (\ref{eq:pfjn}), we have
	\begin{align*}
		\lim_{n \to \infty} P'(E_{i,n}) = \dfrac{E_i}{32a + 16b + 128c + 256d} = P'(E_{i,\infty}),\\
		\lim_{n \to \infty} P'(F_{j,n} | E_{i,n}) = \dfrac{F_j}{32w + 16x + 128y + 256z} = P'(F_{j,\infty}).
		\end{align*}
	Here we have used the fact that $\lim_{n\to\infty}|E_{i,n}|\cdot n^{-2} =  E_i$, which follows from the definitions of these quantities.  In particular, $\lim_{n \to \infty} |E_{i,n}|\cdot n^{-2}$ is exactly one of the constants in $\{32,\, 16,\, 128,\, 256 \}$. Similarly, we have $\lim_{n \to \infty}(|F_{j,n}|-h_{i,j,n})\cdot n^{-2} = F_j$ and $\lim_{n\to\infty} q_{i,n} \cdot n^{-2} = 0$. 
	
	Thus, we have
	\[ \lim_{n \to \infty} P'(E_{i,n} \cap F_{j,n}) = \lim_{n\to \infty} P'(F_{j,n} | E_{i,n})\cdot P'(E_{i,n}) = P'(E_{i,\infty})\cdot P'(F_{j,\infty}) = P'(E_{i,\infty} \cap F_{j,\infty})\]
	for each $i = 1,\,2,\, \dots,\, L$ and $j = 1,\,2,\,\dots,\,M$.
\end{proof}

As a corollary to Propositions \ref{proposition:handlayout} and \ref{proposition:handvalue} we are able to calculate the probability of winning, pushing, and losing and expectations conditioned only on the starting hand totals of the player and the dealer - instead of being conditioned on the starting hand layouts.

\begin{corollary}
	\label{corollary:level2}
	Suppose that the initial hand values of the player and dealer are fixed. Then, as $n \to \infty$, 
	\begin{align*}
		P'(W_{h,n}) & = \sum_{i,j} P'(E_{i,n} \cap F_{j,n}) \cdot P''(W_{h,n}) \to \sum_{i,j} P'(E_{i,\infty} \cap F_{j,\infty}) \cdot P''(W_{h,\infty}) = P'(W_{h,\infty}),\\
		P'(T_{h,n}) & = \sum_{i,j} P'(E_{i,n} \cap F_{j,n}) \cdot P''(T_{h,n}) \to \sum_{i,j} P'(E_{i,\infty} \cap F_{j,\infty}) \cdot P''(T_{h,\infty}) = P'(T_{h,\infty}),\\
		P'(L_{h,n}) & = \sum_{i,j} P'(E_{i,n} \cap F_{j,n}) \cdot P''(L_{h,n}) \to \sum_{i,j} P'(E_{i,\infty} \cap F_{j,\infty}) \cdot P''(L_{h,\infty}) = P'(L_{h,\infty}),\\
		E'[X_{h,n}] &= \sum_{i,j} P'(E_{i,n} \cap F_{j,n}) \cdot E''[X_{h,n}] \to \sum_{i,j} P'(E_{i,\infty}\cap F_{j,\infty}) \cdot E''[X_{h,\infty}] = E'[X_{h,\infty}],\\
		E'[X_{s,n}] &= \sum_{i,j} P'(E_{i,n} \cap F_{j,n}) \cdot E''[X_{s,n}] \to \sum_{i,j} P'(E_{i,\infty}\cap F_{j,\infty}) \cdot E''[X_{s,\infty}] = E'[X_{s,\infty}].
	\end{align*}
	Furthermore, there is a positive integer $N$ sufficiently large such that for any $n \ge N$, the basic strategy for $n$ decks is the same as the basic strategy using 1 deck with replacement. 
\end{corollary}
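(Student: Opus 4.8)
The plan is to establish the corollary in two stages: first the five limit identities, then the stabilization of the basic strategy.

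\textbf{Stage 1: the limits.} Fix the starting hand values of the player and dealer. The essential observation is that each of the five displayed expressions is a \emph{fixed finite} sum: it has $L\cdot M$ terms, and $L$ and $M$ — the numbers of distinct layouts compatible with the two prescribed totals — do not depend on $n$; only the permutation counts inside each layout vary with $n$. Each summand is a product of two factors. The first, $P'(E_{i,n}\cap F_{j,n})$, converges to $P'(E_{i,\infty}\cap F_{j,\infty})$ by Proposition \ref{proposition:handvalue}. The second is one of the layout-conditioned quantities $P''(W_{h,n})$, $P''(T_{h,n})$, $P''(L_{h,n})$, $E''[X_{h,n}]$, $E''[X_{s,n}]$, which converges to its $\infty$-counterpart by Proposition \ref{proposition:handlayout}. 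Since a product of convergent sequences converges to the product of the limits and a finite sum of convergent sequences converges to the sum of the limits, each of the five displays follows from the algebra of limits. I would write the argument out once for $P'(W_{h,n})$ and note that the other four are verbatim identical; the extra $1.5$ factor occurring when the player total is $21$ changes nothing, as it is already incorporated into Proposition \ref{proposition:handlayout}.

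\textbf{Stage 2: the basic strategy.} There are only finitely many ordered pairs of starting hand values, and for each such pair $p$ the basic-strategy decision in the $n$-deck game is a function of $\operatorname{sign}\!\big(D_n(p)\big)$, where $D_n(p) := E'[X_{h,n}] - E'[X_{s,n}]$ (hit when $D_n(p) > 0$, stand otherwise), and the with-replacement decision is the corresponding function of $\operatorname{sign}\!\big(D_\infty(p)\big)$. From the explicit formulas in Propositions \ref{proposition:handlayout} and \ref{proposition:handvalue}, the quantities $E'[X_{h,n}]$ and $E'[X_{s,n}]$ are finite sums of products of rational functions of $n$ — the denominators being $52n-4$, $52n-5$, and the layout-count polynomials appearing in (\ref{eq:peijn}) and (\ref{eq:pfjn}) — so $D_n(p)$ is itself a rational function of $n$. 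A rational function is either identically zero or has finitely many zeros, so $\operatorname{sign}(D_n(p))$ is constant for all sufficiently large $n$. When $D_\infty(p)\ne 0$, Stage 1 forces this eventual sign to equal $\operatorname{sign}(D_\infty(p))$, and the two decisions agree for $n$ past some threshold $N_p$. Taking $N = \max_p N_p$ over the finitely many pairs yields the required $N$.

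\textbf{Main obstacle.} The genuine difficulty is confined to pairs $p$ with $D_\infty(p) = 0$, where hitting and standing are exactly equally good in the with-replacement game: convergence $D_n(p)\to 0$ alone does not pin down which side of $0$ the (possibly nonzero) values $D_n(p)$ lie on, so the $n$-deck decision could in principle disagree with the tie-broken with-replacement decision. The rational-function structure reduces this to a finite verification: for each of the finitely many pairs at which $D_\infty(p) = 0$, clear the common denominator and check that the numerator of $D_n(p)$ is either the zero polynomial or has a leading coefficient whose sign is consistent with the tie-break (so that $D_n(p)\le 0$ for large $n$). For the variants treated here this check is carried out by the same computation that produces the strategy tables, which also exhibits a concrete small value of $N$; alternatively, if one settles for the weaker assertion that the two strategies agree except at pairs where both decisions give the identical limiting expectation, Stage 1 already delivers it with no further work.
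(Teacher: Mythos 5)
Your Stage 1 is exactly the argument the paper intends: the corollary is presented as an immediate consequence of Propositions \ref{proposition:handlayout} and \ref{proposition:handvalue}, the point being that each display is a finite sum (over the at most $55\times 55$ layout pairs compatible with the fixed totals, with $L$ and $M$ independent of $n$) of products of two sequences whose convergence is supplied by those propositions, so the algebra of limits finishes it. Where you genuinely go beyond the paper is Stage 2. The paper's entire justification for the strategy claim is the sentence in Remark \ref{remark:basicstrat} that ``given the limiting of the $E'$ quantities, we must have that the basic strategy eventually stays fixed,'' which is exactly the step you correctly identify as insufficient: convergence of $E'[X_{h,n}]-E'[X_{s,n}]$ to its limit pins down the eventual decision only when that limit is nonzero, and says nothing about agreement with the tie-broken with-replacement decision when it vanishes. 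Your observation that $D_n(p)$ is a rational function of $n$ (clear from the explicit forms in the two propositions), hence of eventually constant sign, is what actually delivers eventual stabilization of the $n$-deck strategy in all cases, and your reduction of the remaining tie cases to a finite numerical check is the honest way to close the gap between ``the strategy stabilizes'' and ``it stabilizes to the with-replacement strategy.'' In short: same route as the paper for the five limits, and a strictly more careful (and, for the tie case, necessary) argument for the final sentence of the corollary, which the paper asserts without addressing the degenerate case.
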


\begin{remark}\label{remark:basicstrat}
	It is based on the result of Corollary \ref{corollary:level2}, involving the expected values, that we are able to generate the basic strategy for $n$ deck and 1 deck with replacement variations.  That is, given starting hand totals for the player and dealer, we compare $E'[X_{h,\cdot}]$ with $E'[X_{s,\cdot}]$ and choose hit or stand based on whichever is larger. Given the limiting of the $E'$ quantities, we must have that the basic strategy eventually stays fixed for a large enough number of decks.
	\end{remark}
We emphasize again that all probabilities involving $P''$ are implicitly conditioned on the specific hand layouts $E_{i, \cdot}$ and $F_{j, \cdot}$ despite not being indexed by $i$ or $j$. We remove the reference to $i$ and $j$ at the level of $P''$ to maintain a lower level of complexity in terms of the indices used.

Our final goal is to generate the overall probabilities of winning, pushing, and losing, as well as the expectation, given that we play according to the basic strategy determined by Corollary \ref{corollary:level2}. We begin by calculating the probabilities associated to the player and dealer receiving all permutations of initial starting totals. These probabilities are not conditioned on any prior information. 

\begin{proposition}
	\label{proposition:nocondition}
	Suppose we are playing the two up-card variation of the game and let $n$ be the number of decks. Also, let $x$ and $y$ be the starting hand totals of the player and dealer, respectively. Then, as $n \to \infty$, 
	\begin{align*}
		P(A_{x,n} \cap B_{y,n}) &\to P(A_{x,\infty} \cap B_{y,\infty}), 
	\end{align*}
	where $A_{x,\cdot}$ and $B_{y,\cdot}$ consist of all starting hands for the player and dealer whose values are $x$ and $y$, respectively.  
\end{proposition}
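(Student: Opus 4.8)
The plan is to mirror the proof of Proposition~\ref{proposition:handvalue}, replacing the conditional measure $P'$ by the unconditioned measure $P$. First I would note that for fixed totals $x$ and $y$ the event $A_{x,n}\cap B_{y,n}$ is the disjoint union $\bigcup_{i,j}\big(E_{i,n}\cap F_{j,n}\big)$ over the finitely many pairs consisting of a distinct player hand layout $E_{i,n}$ of value $x$ ($i=1,\dots,L$) and a distinct dealer hand layout $F_{j,n}$ of value $y$ ($j=1,\dots,M$); the counts $L$ and $M$ depend only on $x$ and $y$, not on $n$, by the enumeration of the $55$ hand layouts given just before Proposition~\ref{proposition:handvalue}. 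Hence it suffices to show $P(E_{i,n}\cap F_{j,n})\to P(E_{i,\infty}\cap F_{j,\infty})$ for each fixed $i,j$ and then add the finitely many limits. Since this proposition concerns only the distribution of the starting totals and not the subsequent play, there is no need to separate out the natural-$21$ cases here.

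For the $n$-deck game, dealing two ordered cards to the player and then two to the dealer is a uniform choice among $P(52n,4)=(52n)(52n-1)(52n-2)(52n-3)$ outcomes, so
\[ P(E_{i,n}\cap F_{j,n})=\dfrac{|E_{i,n}|\cdot\big(|F_{j,n}|-h_{i,j,n}\big)}{(52n)(52n-1)(52n-2)(52n-3)}, \]
with $|E_{i,n}|,\,|F_{j,n}|\in\{32n^2,\,16n^2-4n,\,128n^2,\,256n^2-16n\}$ counting ordered layouts and $h_{i,j,n}$ the number of dealer layouts of type $j$ that reuse one of the player's two fixed cards, exactly as in equation~(\ref{eq:pfjn}). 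Since $h_{i,j,n}=\mathcal{O}(n)$, the numerator is a polynomial in $n$ of degree $4$ with leading coefficient $E_iF_j$, where $E_i,F_j\in\{32,16,128,256\}$ are the respective leading coefficients of $|E_{i,n}|$ and $|F_{j,n}|$, and the denominator is a polynomial of degree $4$ with leading coefficient $52^4$; therefore
\[ \lim_{n\to\infty}P(E_{i,n}\cap F_{j,n})=\dfrac{E_iF_j}{52^4}. \]
In the $1$-deck-with-replacement game the player's two cards and the dealer's two cards are drawn independently, so $P(E_{i,\infty}\cap F_{j,\infty})=P(E_{i,\infty})\,P(F_{j,\infty})=\dfrac{E_i}{2704}\cdot\dfrac{F_j}{2704}=\dfrac{E_iF_j}{52^4}$, matching the limit above. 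Summing over $i=1,\dots,L$ and $j=1,\dots,M$ then yields $P(A_{x,n}\cap B_{y,n})\to P(A_{x,\infty}\cap B_{y,\infty})$.

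I do not expect a conceptual obstacle; the only delicate point is the combinatorial bookkeeping behind the claim that $|F_{j,n}|-h_{i,j,n}$ is a degree-$2$ polynomial in $n$ with leading coefficient $F_j$, i.e.\ checking, category by category, how removing the (up to two, possibly ten-valued or ace) player cards changes the number of dealer layouts of type $j$. This is precisely the accounting already carried out for equation~(\ref{eq:pfjn}) in the proof of Proposition~\ref{proposition:handvalue}, and in any event the correction $h_{i,j,n}$ is $\mathcal{O}(n)$ and so cannot affect the degree-$4$ leading terms that determine the limit; thus the work here is organizational rather than substantive. As with the earlier results, the argument is identical for the one up-card and no up-card variations.
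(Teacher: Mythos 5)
Your argument is correct and is essentially the paper's own proof: the paper factors $P(A_{x,n}\cap B_{y,n})=P(B_{y,n}\mid A_{x,n})\,P(A_{x,n})$ with the layout counts already aggregated into the numerators, which after combining the two denominators $52n(52n-1)$ and $(52n-2)(52n-3)$ is exactly your single ratio over $(52n)(52n-1)(52n-2)(52n-3)$, followed by the same degree-$4$ leading-coefficient limit yielding $P(A_{x,\infty})\cdot P(B_{y,\infty})$. The only difference is organizational: you decompose into layout pairs $(i,j)$ and sum the termwise limits at the end, whereas the paper sums the permutation counts into the numerators (via the constants $a,b,c,d$ and $w,x,y,z$) before passing to the limit.
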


\begin{proof}
	Let $x$ and $y$ be the starting totals of the player and dealer, respectively. Then,
	\begin{align*}
		P(A_{x,n} \cap B_{y,n}) &= P(B_{y,n} | A_{x,n}) \cdot P(A_{x,n}) \\
		& = \dfrac{32n^2w  +  (16n^2 - 4n)x + 128n^2y +(256n^2 - 16n) z - q_{i,n} }{(52n)^2-260n+6}\\
		&\hspace{20pt} \cdot \dfrac{32n^2a +  (16n^2 - 4n)b + 128n^2c + (256n^2 - 16n)d }{(52n)^2-52n} ,
		\end{align*}
	where $a,\,b,\,c,\,d$ and $w,\,x,\,y,\,z$ and $q_{i,n} = \mathcal{O}(n)$ are defined as in Proposition \ref{proposition:handvalue}. In limit, we have
	\begin{align*}
		\lim_{n \to \infty} P(A_{x,n} \cap B_{y,n}) &= \dfrac{32w + 16x + 128y + 16z}{2704}\cdot\dfrac{32a+16b + 128c+256d}{2704} \\
		&= P(A_{x,\infty})\cdot P(B_{y,\infty}) \\
		&= P(A_{x,\infty} \cap B_{y,\infty}),
	\end{align*}
	which establishes the result.
\end{proof}

Finally, we are able to establish the main result of the section as a corollary to the above propositions. In particular, the overall probabilities of winning, pushing, and losing and expectation when using $n$ decks limit to the corresponding quantities using 1 deck with replacement as the number of decks approaches infinity. 

\begin{theorem}
	\label{theorem:main}
	Let $X_n$ and $X_\infty$ be the payouts when the player plays using the basic strategy dictated by Remark \ref{remark:basicstrat}. Also, let $W_n,\, T_n,\, L_n$ and $W_\infty,\, T_\infty, \, L_\infty$ be the events of the player winning, pushing, and losing when using basic strategy with $n$ decks and 1 deck with replacement, respectively. Then, as $n \to \infty$, 
	\begin{align*}
		P(W_n) & = \sum_{x,y} P(A_{x,n} \cap B_{y,n}) \cdot P'(W_{n}) \to \sum_{x,y} P(A_{x,\infty} \cap B_{y,\infty})\cdot P'(W_{\infty}) = P(W_\infty),\\
		P(T_n) & = \sum_{x,y} P(A_{x,n} \cap B_{y,n}) \cdot P'(T_{n}) \to \sum_{x,y} P(A_{x,\infty} \cap B_{y,\infty})\cdot P'(T_{\infty}) = P(T_\infty),\\
		P(L_n) & = \sum_{x,y} P(A_{x,n} \cap B_{y,n}) \cdot P'(L_{n}) \to \sum_{x,y} P(A_{x,\infty} \cap B_{y,\infty})\cdot P'(L_{\infty}) = P(L_\infty),\\
		E[X_{n}] &= \sum_{x,y} P(A_{x,n} \cap B_{y,n}) \cdot E'[X_{n}] \to \sum_{x,y} P(A_{x,\infty} \cap B_{y,\infty}) \cdot E'[X_\infty] = E[X_{\infty}]. 
		\end{align*}	
\end{theorem}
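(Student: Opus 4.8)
The plan is to prove Theorem \ref{theorem:main} by a law-of-total-probability decomposition over the finitely many pairs of starting hand totals, thereby reducing the statement entirely to the convergence facts already in hand from Proposition \ref{proposition:nocondition} and Corollary \ref{corollary:level2}. First I would note that, for each fixed $n$, the events $A_{x,n} \cap B_{y,n}$ partition the sample space as $(x,y)$ ranges over the (finitely many) possible two-card totals for the player and dealer. Hence the four decompositions displayed in the theorem are valid instances of the law of total probability, with $P'(W_n)$, $P'(T_n)$, $P'(L_n)$, $E'[X_n]$ understood to be conditioned on $A_{x,n}\cap B_{y,n}$, and the identical decomposition holds in the $1$-deck-with-replacement model.

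Next I would pin down the conditional factors. By Corollary \ref{corollary:level2} there is an $N$ so that for every $n \ge N$ the basic strategy for $n$ decks agrees with that for $1$ deck with replacement; fix such an $N$. Then for each pair $(x,y)$ there is a single action $\delta(x,y) \in \{h,s\}$ — independent of $n$ for $n \ge N$ and equal to the action used in the replacement model — with $P'(W_n) = P'(W_{\delta(x,y),n})$, and similarly for $T$, $L$, and $E'[X_n]$. When the player is dealt a natural ($x = 21$) no decision is made and the conditional quantities are deterministic (the player wins $1.5$ if $y \ne 21$ and pushes if $y = 21$), so they do not vary with $n$, the expectation in this case using the $1.5$ payout recorded at the end of Proposition \ref{proposition:handlayout}; a player bust after hitting is already absorbed into the losing event. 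In every case Corollary \ref{corollary:level2} yields, for each fixed $(x,y)$,
\[ P'(W_n) \to P'(W_\infty), \quad P'(T_n) \to P'(T_\infty), \quad P'(L_n) \to P'(L_\infty), \quad E'[X_n] \to E'[X_\infty]. \]

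Then I would invoke Proposition \ref{proposition:nocondition}, which gives $P(A_{x,n}\cap B_{y,n}) \to P(A_{x,\infty}\cap B_{y,\infty})$ for each fixed $(x,y)$. Each summand in the decomposition is thus a product of two convergent real sequences and converges to the product of the limits, and since the index set is finite the sums converge to the sums of the limits. Applying this to all four decompositions gives $P(W_n) \to P(W_\infty)$, $P(T_n) \to P(T_\infty)$, $P(L_n) \to P(L_\infty)$, and $E[X_n] \to E[X_\infty]$.

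The only genuine subtlety — the step I would flag as the crux — is the stabilization of the basic strategy: a priori $P'(W_n)$ need not be given by one fixed formula for all $n$, since the optimal action at a given $(x,y)$ could oscillate with $n$; it is precisely the eventual-constancy clause of Corollary \ref{corollary:level2} (and Remark \ref{remark:basicstrat}) that rules this out, after which everything else is routine manipulation of limits. The remaining work is purely bookkeeping: confirming the partition, and checking that the natural, dealer-natural, and player-bust cases are each subsumed as described above.
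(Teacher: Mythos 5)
Your proposal is correct and follows essentially the same route as the paper: the paper presents Theorem \ref{theorem:main} as an immediate corollary of Proposition \ref{proposition:nocondition} (convergence of the weights $P(A_{x,n}\cap B_{y,n})$) and Corollary \ref{corollary:level2} (convergence of the conditional probabilities and expectations, together with the eventual stabilization of the basic strategy), combined via the law of total probability over the finitely many pairs of starting totals. Your explicit flagging of the strategy-stabilization step as the crux, and your handling of the natural case, are just more careful write-ups of what the paper leaves implicit.
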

Note that the quantities $P'(W_{n}),\, P'(T_{n}),\,P'(L_{n})$ and $P'(W_{\infty}),\, P'(T_{\infty}),\,P'(L_{\infty})$ are all implicitly conditioned on the starting totals of the player and dealer, $A_{x,n} \cap B_{y,n}$, despite not being indexed by $x$ and $y$. 

\begin{remark}
	In the above argument, we assumed that we were playing the two up-card variation of the game, but our results are easily transferable to the other two dealer information variations. One can consider these overall results as a procedure of three stages: complete knowledge, partial knowledge and no knowledge of the player and dealer hands (as described in Appendix \ref{sec:derivation}). However, when we are playing the one up-card variation, there will be an added step in the partial knowledge stage, where we only know the value of the dealer's up-card and the value of the player's hand. It is again at this stage where the optimal decision is determined. Similarly, when we are playing the no up-card variation, there is a further added step at the partial knowledge stage, where we only know the value of the player's hand. These extra calculations at this stage are done in a very similar way to the above.
\end{remark}

\subsection{Overall Probabilities and Expectation using 1 Deck With Replacement}
\label{sec:1dwr}

Based on the calculations in Section \ref{sec:longtermbehavior}, we are able to generate the basic strategy for each of our variations. When we employ this strategy, we can summarize the overall probabilities and expectations, and this is done in Table \ref{table:1dwr} below.
\begin{table}[h!]	
	\setlength{\arrayrulewidth}{0.5mm}
	\centering{
	\begin{tabular}{|c|c|c|c|}	
		\hline
		\rowcolor{RoyalBlue!20} 
		\textbf{Variation:} & \textbf{Two up-card} & \textbf{One up-card} & \textbf{No up-card} \\ \hline
		\cellcolor{RoyalBlue!20}$P(W_\infty)$      & 0.46935  & 0.45413  & 0.45282      \\ \hline
		\cellcolor{RoyalBlue!20}$P(L_\infty)$      & 0.44543  & 0.46392  & 0.46698      \\ \hline
		\cellcolor{RoyalBlue!20}$P(T_\infty)$      & 0.08522  & 0.08195  & 0.08020     \\ \hline
		\cellcolor{RoyalBlue!20}$E[X_\infty]$      & 0.046467 & 0.012959 & 0.008384   \\ \hline
	\end{tabular} }
\caption{The entries of the table give the overall probabilities of winning, losing, and pushing and expectation when we play using one deck with replacement for each of the three deck composition variations.}
\label{table:1dwr}
\end{table}

When we compare the overall results regarding our three deck composition variations in Theorem \ref{theorem:deckcomp} with the results in Table \ref{table:1dwr} regarding the one deck without replacement results, we can see that there is not a significant change. However, it is our understanding that casinos use more decks primarily as a preventative measure for card counting, as well as to reduce the frequency of shuffling to keep the rate of play high. We also suspect that in traditional Blackjack, the overall expectation goes down too when the number of decks goes up. Interestingly, when we play using one deck with replacement, card counting is actually not possible since the distribution of all card values is fixed no matter how many cards are in play. That is, if one could \textit{practically} play Blackjack using one deck with replacement, card counting would be eliminated, while simultaneously providing a larger edge to the casino.


\appendix

\section{Derivation of Results} 
\label{sec:derivation}

In this section of the paper, we outline how our program generates the results provided in Sections \ref{sec:computations} and \ref{sec:deckcomp}, as well as provide a concrete example to elucidate the abstract computations done within the body of the paper. We will again limit our discussion to the one deck, two up-card variation of the game for concreteness, but the core principles are unchanged for the other variations. The procedure for generating these results comes in three stages:
\begin{enumerate}[1.]
	\item \textbf{Complete knowledge}: Specific hand layouts for the player and dealer are known. 
	\item \textbf{Partial knowledge}: Only the hand values of the player and dealer are known.
	\item \textbf{No knowledge}: No information is known about the hand of the player or dealer.
\end{enumerate}
As we progress through the three stages, the program utilizes loops to fill in missing information.  For example, at stage 3, loops are used to cycle through all permutations of starting hand totals for the player and dealer. Within this loop, at stage 2, there are loops that cycle through all permutations of distinct hand layouts for the player and dealer given their respective starting hand values. 

At stage 1, the cards for both player and dealer have been dealt and are considered known information. When the player or dealer hit, a sequence of loops are used to cycle through the remaining cards and temporarily add them to the hand. In doing so, we are able to count the number of cards that will result in the player winning, losing, and pushing. Once these are used to calculate the probabilities for each of the two decisions, the expected values are computed. These probabilities and expectations are carried through all three stages of the procedure where they will be weighed by how likely this particular deal of cards was to occur. This strategy is what we used in the proof of Proposition \ref{proposition:handlayout}.

The basic strategy is calculated at stage 2. We first compute the probability of getting a specific hand layout given a particular starting total for both player and dealer. This is done by counting the number of combinations of each distinct hand layout and dividing by the total number of combinations that have this starting total. We emphasize that at this level, we do not know the specific cards that the player and dealer have, but rather only their starting hand values. Then, we use these quantities to weigh the stage 1 probabilities and expectations for each decision. That is, given the starting totals for the player and dealer, the larger of the two weighed expectations gives us the optimal decision. This is the approach taken in the proof of Proposition \ref{proposition:handvalue} and the resulting Corollary \ref{corollary:level2}.

Lastly, in stage 3, we obtain the overall probabilities and expectations when we play according to the basic strategy determined at the end of stage 2. The starting point for this stage is to determine the probability for every possible permutation of starting hand totals for the player and dealer. Once we have these, we use them to weigh the \textit{optimal} stage 2 probabilities and expectations.  The sum of these weighed quantities give us the overall results. This process is done in the proof of Proposition \ref{proposition:nocondition} and the main theoretical result of the paper Theorem \ref{theorem:main}.

For clarity, we demonstrate the computations at each stage concretely in the following Example \ref{ex:stage1} below.

\begin{example}\label{ex:stage1}
Suppose the player and dealer are both dealt two cards face-up from a single deck of cards. That is, we are in stage 1, described above. In particular, assume:
\begin{itemize}
	\item  The player is dealt: $(A,2)$.  That is, the player has a soft hand value of 13. 
	\item  The dealer is dealt: $(6,8)$.  That is, the dealer has a hard hand value of 14. 
\end{itemize}
Should the player hit, or should they stand?

To determine the optimal decision, we must first compute the probability that the player will win, lose, and push when they hit, and when they stand. Note here that the dealer \textit{must} hit, since their total is less than 17.

Suppose first that the player will \textit{stand} with 13. There are only two possible outcomes - win and lose - since a push is not possible in this scenario.  The player will win only when the dealer busts.  That is, the dealer's hit-card is valued at 8 or higher. Every dealer hit-card valued 7 or lower will result in the player losing. Since there are 48 cards remaining in the deck, we have the following probabilities:
\[ P''(W_s) = \dfrac{6\cdot 4 - 1}{52 - 4} = \dfrac{23}{48}, \hspace{20pt} P''(L_s) = \dfrac{7\cdot 4 - 3}{52 - 4} = \dfrac{25}{48}, \hspace{20pt} P''(T_s) = \dfrac{0\cdot 4 - 0}{52 - 4} = \dfrac{0}{48} .\]
Here $W_s$, $L_s$, and $T_s$ are the events that the player wins, loses, and pushes. We use the $''$ to signify that these probabilities are dependent on the initial hand layouts that are given. These probabilities are calculated by multiplying the number of ranks whose value will result in the player winning, losing, and tying by four, and then subtracting out the number of cards within these ranks that are no longer in the deck. We generalize this in Section \ref{sec:deckcomp}, where we adopt an analogous notation. Therefore, the expected value of a $\$1$ bet for the player when they stand is
\[ E''[X_{s}] = 1\cdot \dfrac{23}{48} + (-1) \cdot \dfrac{25}{48} = -\dfrac{2}{48} \approx -0.0417 .\]

Suppose now that the player will \textit{hit}. To calculate the probability of winning, losing, and pushing, we must first condition on what the player's hit-card is. Once we know the player's hit card, the computation for the probabilities are the same as in the previous paragraph. For example, suppose that the player's hit-card is an ace. Then, the player's new total is 14, and 
\begin{align*}
	P''(W_h \cap R_1) &= P''(W_h | R_1)\cdot P''(R_1) = \dfrac{6 \cdot 4 - 1}{52 - 5}\cdot \dfrac{4 - 1}{52 - 4} =  \dfrac{69}{2256},  \\
	P''(L_h \cap R_1) &= P''(L_h | R_1)\cdot P''(R_1) = \dfrac{7\cdot 4 - 4}{52 - 5}\cdot \dfrac{4-1}{52-4} =  \dfrac{72}{2256} , \\
	P''(T_h \cap R_1) &= P''(T_h | R_1)\cdot P''(R_1) = \dfrac{0\cdot 4 - 0}{52 - 5}\cdot \dfrac{4 - 1}{52 - 4} =  \dfrac{0}{2256},  
\end{align*}
where $W_h, \, L_h, \, T_h$ are the events that the player wins, loses, and pushes when they hit, respectively. Also, $R_r$ represents the event that the player's hit-card is valued at $r$. We repeat this for every possible value that the player's hit-card can take. We summarize the results in Table \ref{table:numerators} below.  

\begin{table}[h!]
	\centering
	\setlength{\arrayrulewidth}{0.5mm}
	\begin{tabular}{|c|c|c|c|c|c|c|c|c|c|c|c|}
		\hline
		\rowcolor{RoyalBlue!20}
		\cellcolor{RoyalBlue!20} \textbf{Hit-card:} & A  & 2  & 3   & 4   & 5   & 6   & 7   & 8   & 9   & 10 & Total \\ \hline
		\cellcolor{RoyalBlue!20}$W$ & 69 & 69 & 104 & 116 & 132 & 111 & 164 & 129 & 88   & 352 & 1334 \\ \hline
		\cellcolor{RoyalBlue!20}$L$ & 72 & 63 & 72  & 56  & 40  & 18  & 12  & 0   & 100 & 400 & 833 \\ \hline
		\cellcolor{RoyalBlue!20}$T$ & 0  & 9  & 12  & 16  & 16  & 12  & 12  & 12  & 0   & 0  & 89 \\ \hline
	\end{tabular}
	\caption{The entries of the table give the numerator in the probability of the intersection between the row and column with same denominator of 2256.}
	\label{table:numerators}
\end{table}

What we have done is split up the probability of winning when the player hits into ten mutually exclusive events according to the possible values the player's hit-card can take. Then we used the conditional probability formula to calculate the probabilities of the individual mutually exclusive events.  Then,
\begin{align*}
	P''(W_h) &= P''(W_h \cap R_1) + P''(W_h \cap R_2) + \cdots + P''(W_h \cap R_{10})  = \dfrac{1334}{2256},\\
	P''(L_h) &= P''(L_h \cap R_1) + P''(L_h \cap R_2) + \cdots + P''(L_h \cap R_{10}) = \dfrac{833}{2256},\\
	P''(T_h) &= P''(T_h \cap R_1) + P''(T_h \cap R_2) + \cdots + P''(T_h \cap R_{10}) = \dfrac{89}{2256}.
\end{align*}
Thus, the expected value of a $\$1$ bet when we hit is given by
\[ E''[X_{h}] = 1\cdot \dfrac{1334}{2256} + (-1)\cdot\dfrac{833}{2256} = \dfrac{501}{2256} \approx 0.2221. \]

Moving to stage 2, we will be assuming that the player has been dealt a soft 13 and the dealer a hard 14. What we calculate at this level are the probabilities of being dealt each permutation of hand layouts that are possible given these starting hand totals. In the notation of Proposition \ref{proposition:handvalue}, this means $E_{i,1}$ is the collection of all hands valued at soft 13, of which there is only 1 distinct hand layout, $(A, 2)$ (i.e. $L = 1$).  Because there are $2\cdot4\cdot4$ combinations of this layout, we have $|E_{i,1}| = 32$ and thus $P'(E_{i,1}) = \frac{32}{32} = 1$. 

Next, there are 4 distinct hand layouts that will produce a hand total of hard 14: $(4,10)$, $(5,9)$, $(6,8)$, and $(7,7)$ (i.e. $M = 4$). In the notation of Proposition \ref{proposition:handvalue} we have:
\begin{itemize}
	\item  $F_{1,1}$: Consists of all $|F_{1,1}| = 128$ permutations of $(4,10)$.
	\item  $F_{2,1}$: Consists of all $|F_{2,1}| = 32$ permutations of $(5,9)$.
	\item  $F_{3,1}$: Consists of all $|F_{3,1}| = 32$ permutations of $(6,8)$.
	\item  $F_{4,1}$: Consists of all $|F_{4,1}| = 12$ permutations of $(7,7)$.
\end{itemize}
That is, there are $128 + 32 + 32 + 12 = 204$ total permutations of these four distinct hand layouts valued at hard 14. Note that in $(4,10)$ we mean any card whose value is 10 (10, Jack, Queen, King), not strictly rank 10, necessarily. Additionally, since there are no hard 14 hands that consist of an Ace or a 2, we have independence of $E_{i,1}$ and $F_{j,1}$ for each $i$ and $j$.  Thus, 
\[ P'(E_{i,1} \cap F_{j,1}) = P'(E_{i,1})\cdot P'(F_{j,1}) .\]
For example, 
\[P'(E_{1,1} \cap F_{3,1}) = \dfrac{32}{32} \cdot \dfrac{32}{204} = \dfrac{8}{51}.\] 
However, we emphasize that we do not always have independence, and therefore the above probability of this intersection will typically require a conditional probability, as is done in Proposition \ref{proposition:handvalue}.  The following table summarizes all probabilities of the intersections of $E_{i,1}$ and $F_{j,1}$:

\begin{table}[h!]
	\centering
	\setlength{\arrayrulewidth}{0.5mm}
	\renewcommand{\arraystretch}{2}
	\begin{tabular}{|c|c|c|c|c|}
	\hline
	\cellcolor{RoyalBlue!20}\backslashbox{\textbf{\textbf{Player}}}{\textbf{Dealer}}	& \cellcolor{RoyalBlue!20}$F_{1,1}$ &\cellcolor{RoyalBlue!20} $F_{2,1}$ & \cellcolor{RoyalBlue!20}$F_{3,1}$ & \cellcolor{RoyalBlue!20}$F_{4,1}$ \\ 
	\hline
	\cellcolor{RoyalBlue!20}$E_{1,1}$ & $\dfrac{32}{51}$ & $\dfrac{8}{51}$ & $\dfrac{8}{51}$ & $\dfrac{3}{51}$  \\ 
	\hline
	\end{tabular}
	\caption{The entries of the table give the probability $P(E_{i,1} \cap  F_{j,1})$, where $E_{i,1}$ and $F_{j,1}$ are the events corresponding to the distinct hand layouts for the player and dealer, respectively, given the initial starting hand totals.}
	\label{table:stage2}
\end{table}

In the following table, we have summarized all stage 1 expectations for hitting and standing given the starting totals for the player and dealer:

\begin{table}[h!]
	\centering
	\setlength{\arrayrulewidth}{0.5mm}
	\renewcommand{\arraystretch}{2}
	\begin{tabular}{|c|c|c|c|c|}
		\hline
		\cellcolor{RoyalBlue!20} \textbf{Hand Layouts:}	& \cellcolor{RoyalBlue!20} \(E_{1,1} \cap F_{1,1}\) &\cellcolor{RoyalBlue!20} $E_{1,1} \cap F_{2,1}$  & \cellcolor{RoyalBlue!20} $E_{1,1} \cap F_{3,1}$ & \cellcolor{RoyalBlue!20}$E_{1,1} \cap F_{4,1}$ \\ 
		\hline
		\cellcolor{RoyalBlue!20}$E''[X_s]$ & -$\dfrac{2}{48}$ & -$\dfrac{2}{48}$ & -$\dfrac{2}{48}$ & $0$  \\ 
		\hline
		\cellcolor{RoyalBlue!20}$E''[X_h]$ & $\dfrac{547}{2256}$ & $\dfrac{547}{2256}$ & $\dfrac{501}{2256}$ & $\dfrac{597}{2256}$  \\ 
		\hline
	\end{tabular}
	\caption{The entries of the table give the expectations when the player hits and stands given the particular permutation of hand layouts.}
	\label{table:stage1full}
\end{table}

To determine the optimal strategy when the player has a soft 13 and the dealer has a hard 14, we must weigh each stage 1 expected value from Table \ref{table:stage1full} by the corresponding weight from Table \ref{table:stage2} and add together.  In particular, we have
\begin{align*}
	E'[X_s] &= \dfrac{32}{51}\cdot \Big(-\dfrac{2}{48}\Big) + 2\cdot \dfrac{8}{51}\cdot\Big( - \dfrac{2}{48}\Big)  + \dfrac{3}{51}\cdot 0 = - \dfrac{128}{2448} \approx -0.05229,\\
	E'[X_h] &= \dfrac{32}{51}\cdot \dfrac{547}{2256} + \dfrac{8}{51}\cdot \Big(\dfrac{547}{2256} + \dfrac{501}{2256}\Big) +\dfrac{3}{51}\cdot \dfrac{597}{2256} =  \dfrac{27679}{115056} \approx 0.23369.
\end{align*} 

Lastly, given that the correct decision is to hit, we get the overall contribution of the expected value by multiplying this optimal expectation by the probability that we are dealt this permutation of hand values. Following the notation and strategy in the proof of Proposition \ref{proposition:nocondition}, we have
\[ P(A_{13,1} \cap B_{14,1}) = \dfrac{204}{2450}\cdot \dfrac{32}{2652} = \dfrac{816}{812175} ,\]
where we note that the index 13 in this context means \textit{soft} 13 and the 14 means \textit{hard} 14. Thus, when the player has a soft 13 and the dealer has a hard 14, the correct decision is to hit. Therefore, the contribution to the overall expected value is
\[ P(A_{13,1} \cap B_{14,1}) \cdot E'[X_h] \approx 0.0002417 .\]

When we repeat this process and sum all expected value contributions, we get the overall expectation given in Theorem \ref{theorem:deckcomp}.

\end{example}

\section{Basic Strategy Tables}\label{appendix:tables}
In this section, we present the basic strategy tables, which outline the optimal decision in each betting situation for our variants of the game. We remark, however, that the basic strategy tables that follow are not fully optimized. In particular, there are cases - denoted by an asterisk in the tables - in which a more optimal strategy is available, but it depends on knowing the specific hand layout the player or dealer starts with, rather than just the starting totals. For example in the two up-card variation, when the player's starting total is a hard 12 and the dealer's starting total is a hard 14, there are $5\cdot 4= 20$ distinct permutations of hand layouts. In 8 of these permutations the optimal strategy is to stand, and in the remaining 12 the optimal strategy is to hit. Despite this split in strategy, the expected value when the player always hits versus when they always stand favor standing. Notice also that even though we hit in more situations than we stand, the optimal decision is to stand.  This is because the expected value is weighed for each hand layout based on its probability of occurring.  That is, for example, the expected value is more heavily weighed by $(10,2)$ than it is by $(6,6)$, since the former is more likely to occur. Because our expected value depends on making one decision given a specific hand value - not hand layout - the expected value would increase if we were to operate with this level of detail. We have opted for brevity and consistency with how basic strategy is presented for traditional Blackjack.

In our basic strategy tables, we have also separated hard from soft starting hand totals. This is because the strategy for a soft hand is fundamentally different in most cases due to the fact that a soft hand cannot bust in our game - since only one action is allowed. This evidently incentivizes a more aggressive strategy.

\subsection{Two Up-Card}
The following two tables give the basic strategy for the two up-card variation played using one deck.  It is broken into two pieces due to size restrictions - first table considering only hard totals for the dealer and soft totals for the dealer in the second table.
\begin{table}[h]
	\setlength{\arrayrulewidth}{0.5mm}
	\begin{tabular}{|l|c|c|c|c|c|c|c|c|c|c|c|c|c|c|c|c|c|}
		\hline
		\rowcolor{RoyalBlue!20} \backslashbox{\textbf{\textbf{Player}}}{\textbf{Dealer}} & \textbf{4} & \textbf{5} & \textbf{6} & \textbf{7} & \textbf{8} & \textbf{9} & \textbf{10} & \textbf{11} & \textbf{12} & \textbf{13} & \textbf{14} & \textbf{15} & \textbf{16} & \textbf{17} & \textbf{18} & \textbf{19} & \textbf{20} \\ \hline
		\cellcolor{RoyalBlue!20} \textbf{Hard 4 - 11}      & \cellcolor{green!20}H & \cellcolor{green!20}H & \cellcolor{green!20}H & \cellcolor{green!20}H & \cellcolor{green!20}H & \cellcolor{green!20}H & \cellcolor{green!20}H  & \cellcolor{green!20}H  & \cellcolor{green!20}H  & \cellcolor{green!20}H  & \cellcolor{green!20}H  & \cellcolor{green!20}H  & \cellcolor{green!20}H  & \cellcolor{green!20}H  & \cellcolor{green!20}H  & \cellcolor{green!20}H  & \cellcolor{green!20}H  \\ \hline
		\cellcolor{RoyalBlue!20} \textbf{Hard 12}          & \cellcolor{green!20}H & \cellcolor{green!20}H & \cellcolor{green!20}H & \cellcolor{green!20}H & \cellcolor{green!20}H & \cellcolor{green!20}H & \cellcolor{green!20}H  & \cellcolor{green!20}H  & \cellcolor{green!20}H  & \cellcolor{green!20}H  & \cellcolor{green!20}H* & \cellcolor{red!20}S  & \cellcolor{red!20}S  & \cellcolor{green!20}H  & \cellcolor{green!20}H  & \cellcolor{green!20}H  & \cellcolor{green!20}H  \\ \hline
		\cellcolor{RoyalBlue!20} \textbf{Hard 13}          & \cellcolor{green!20}H & \cellcolor{green!20}H & \cellcolor{green!20}H & \cellcolor{green!20}H & \cellcolor{green!20}H & \cellcolor{green!20}H & \cellcolor{green!20}H  & \cellcolor{green!20}H  & \cellcolor{green!20}H  & \cellcolor{green!20}H & \cellcolor{red!20}S  & \cellcolor{red!20}S  & \cellcolor{red!20}S  & \cellcolor{green!20}H  & \cellcolor{green!20}H  & \cellcolor{green!20}H  & \cellcolor{green!20}H  \\ \hline
		\cellcolor{RoyalBlue!20} \textbf{Hard 14}          & \cellcolor{red!20}S & \cellcolor{red!20}S & \cellcolor{red!20}S & \cellcolor{green!20}H & \cellcolor{green!20}H & \cellcolor{green!20}H & \cellcolor{green!20}H  & \cellcolor{green!20}H  & \cellcolor{red!20}S  & \cellcolor{red!20}S  & \cellcolor{red!20}S  & \cellcolor{red!20}S  & \cellcolor{red!20}S  & \cellcolor{green!20}H  & \cellcolor{green!20}H  & \cellcolor{green!20}H  & \cellcolor{green!20}H  \\ \hline
		\cellcolor{RoyalBlue!20} \textbf{Hard 15}          & \cellcolor{red!20}S & \cellcolor{red!20}S & \cellcolor{red!20}S & \cellcolor{red!20}S & \cellcolor{red!20}S & \cellcolor{red!20}S & \cellcolor{red!20}S* & \cellcolor{red!20}S  & \cellcolor{red!20}S  & \cellcolor{red!20}S  & \cellcolor{red!20}S  & \cellcolor{red!20}S  & \cellcolor{red!20}S  & \cellcolor{green!20}H  & \cellcolor{green!20}H  & \cellcolor{green!20}H  & \cellcolor{green!20}H  \\ \hline
		\cellcolor{RoyalBlue!20} \textbf{Hard 16}          & \cellcolor{red!20}S & \cellcolor{red!20}S & \cellcolor{red!20}S & \cellcolor{red!20}S & \cellcolor{red!20}S & \cellcolor{red!20}S & \cellcolor{red!20}S  & \cellcolor{red!20}S  & \cellcolor{red!20}S  & \cellcolor{red!20}S  & \cellcolor{red!20}S  & \cellcolor{red!20}S  & \cellcolor{red!20}S  & \cellcolor{green!20}H  & \cellcolor{green!20}H  & \cellcolor{green!20}H  & \cellcolor{green!20}H  \\ \hline
		\cellcolor{RoyalBlue!20} \textbf{Hard 17}          & \cellcolor{red!20}S & \cellcolor{red!20}S & \cellcolor{red!20}S & \cellcolor{red!20}S & \cellcolor{red!20}S & \cellcolor{red!20}S & \cellcolor{red!20}S  & \cellcolor{red!20}S  & \cellcolor{red!20}S  & \cellcolor{red!20}S  & \cellcolor{red!20}S  & \cellcolor{red!20}S  & \cellcolor{red!20}S  & \cellcolor{red!20}S  & \cellcolor{green!20}H  & \cellcolor{green!20}H  & \cellcolor{green!20}H  \\ \hline
		\cellcolor{RoyalBlue!20} \textbf{Hard and Soft 18} & \cellcolor{red!20}S & \cellcolor{red!20}S & \cellcolor{red!20}S & \cellcolor{red!20}S & \cellcolor{red!20}S & \cellcolor{red!20}S & \cellcolor{red!20}S  & \cellcolor{red!20}S  & \cellcolor{red!20}S  & \cellcolor{red!20}S  & \cellcolor{red!20}S  & \cellcolor{red!20}S  & \cellcolor{red!20}S  & \cellcolor{red!20}S  & \cellcolor{red!20}S  & \cellcolor{green!20}H  & \cellcolor{green!20}H  \\ \hline
		\cellcolor{RoyalBlue!20} \textbf{Hard and Soft 19} & \cellcolor{red!20}S & \cellcolor{red!20}S & \cellcolor{red!20}S & \cellcolor{red!20}S & \cellcolor{red!20}S & \cellcolor{red!20}S & \cellcolor{red!20}S  & \cellcolor{red!20}S  & \cellcolor{red!20}S  & \cellcolor{red!20}S  & \cellcolor{red!20}S  & \cellcolor{red!20}S  & \cellcolor{red!20}S  & \cellcolor{red!20}S  & \cellcolor{red!20}S  & \cellcolor{red!20}S  & \cellcolor{green!20}H  \\ \hline
		\cellcolor{RoyalBlue!20} \textbf{Hard and Soft 20} & \cellcolor{red!20}S & \cellcolor{red!20}S & \cellcolor{red!20}S & \cellcolor{red!20}S & \cellcolor{red!20}S & \cellcolor{red!20}S & \cellcolor{red!20}S  & \cellcolor{red!20}S  & \cellcolor{red!20}S  & \cellcolor{red!20}S  & \cellcolor{red!20}S  & \cellcolor{red!20}S  & \cellcolor{red!20}S  & \cellcolor{red!20}S  & \cellcolor{red!20}S  & \cellcolor{red!20}S  & \cellcolor{red!20}S  \\ \hline
		\cellcolor{RoyalBlue!20} \textbf{Soft 12 - 14}     & \cellcolor{green!20}H & \cellcolor{green!20}H & \cellcolor{green!20}H & \cellcolor{green!20}H & \cellcolor{green!20}H & \cellcolor{green!20}H & \cellcolor{green!20}H  & \cellcolor{green!20}H  & \cellcolor{green!20}H  & \cellcolor{green!20}H  & \cellcolor{green!20}H  & \cellcolor{green!20}H  & \cellcolor{green!20}H  & \cellcolor{green!20}H  & \cellcolor{green!20}H  & \cellcolor{green!20}H  & \cellcolor{green!20}H  \\ \hline
		\cellcolor{RoyalBlue!20} \textbf{Soft 15}          & \cellcolor{red!20}S & \cellcolor{green!20}H & \cellcolor{green!20}H & \cellcolor{green!20}H & \cellcolor{green!20}H & \cellcolor{green!20}H & \cellcolor{green!20}H  & \cellcolor{green!20}H  & \cellcolor{green!20}H  & \cellcolor{green!20}H  & \cellcolor{green!20}H  & \cellcolor{green!20}H  & \cellcolor{green!20}H  & \cellcolor{green!20}H  & \cellcolor{green!20}H  & \cellcolor{green!20}H  & \cellcolor{green!20}H  \\ \hline
		\cellcolor{RoyalBlue!20} \textbf{Soft 16}          & \cellcolor{red!20}S & \cellcolor{red!20}S & \cellcolor{red!20}S & \cellcolor{green!20}H & \cellcolor{green!20}H & \cellcolor{green!20}H & \cellcolor{green!20}H  & \cellcolor{green!20}H  & \cellcolor{green!20}H  & \cellcolor{green!20}H  & \cellcolor{green!20}H  & \cellcolor{green!20}H  & \cellcolor{green!20}H  & \cellcolor{green!20}H  & \cellcolor{green!20}H  & \cellcolor{green!20}H  & \cellcolor{green!20}H  \\ \hline
		\cellcolor{RoyalBlue!20} \textbf{Soft 17 - 20}     & \cellcolor{red!20}S & \cellcolor{red!20}S & \cellcolor{red!20}S & \cellcolor{red!20}S & \cellcolor{green!20}H & \cellcolor{green!20}H & \cellcolor{green!20}H  & \cellcolor{red!20}S & \cellcolor{red!20}S  & \cellcolor{red!20}S  & \cellcolor{red!20}S & \cellcolor{green!20}H  & \cellcolor{green!20}H  & \cellcolor{red!20}S  & \cellcolor{green!20}H  & \cellcolor{green!20}H  & \cellcolor{green!20}H  \\ \hline
	\end{tabular}\caption{Basic strategy for the two up-card variation (single deck) \textit{for hard dealer totals only}.  The $^*$ signifies that the optimal decision depends on the specific layout given this hand total.}
	\label{table:BS2UCHard}
\end{table}

\newpage

\begin{table}[h]
	\setlength{\arrayrulewidth}{0.5mm}
	\begin{tabular}{|l|c|c|c|c|c|c|c|c|c|}
		\hline
		\rowcolor{RoyalBlue!20} \backslashbox{\textbf{\textbf{Player}}}{\textbf{Dealer}}& \textbf{12} & \textbf{13} & \textbf{14} & \textbf{15} & \textbf{16} & \textbf{17} & \textbf{18} & \textbf{19} & \textbf{20} \\ \hline
		\cellcolor{RoyalBlue!20}\textbf{Hard 4 - 13}      & \cellcolor{green!20}H           & \cellcolor{green!20}H           & \cellcolor{green!20}H           & \cellcolor{green!20}H           & \cellcolor{green!20}H           & \cellcolor{green!20}H           & \cellcolor{green!20}H           & \cellcolor{green!20}H           & \cellcolor{green!20}H           \\ \hline
		\cellcolor{RoyalBlue!20}\textbf{Hard 14}          & \cellcolor{red!20}S           & \cellcolor{red!20}S           & \cellcolor{green!20}H           & \cellcolor{green!20}H           & \cellcolor{green!20}H           & \cellcolor{green!20}H           & \cellcolor{green!20}H           & \cellcolor{green!20}H           & \cellcolor{green!20}H           \\ \hline
		\cellcolor{RoyalBlue!20}\textbf{Hard 15}          & \cellcolor{red!20}S           & \cellcolor{red!20}S           & \cellcolor{red!20}S           & \cellcolor{red!20}S           & \cellcolor{green!20}H           & \cellcolor{green!20}H           & \cellcolor{green!20}H           & \cellcolor{green!20}H           & \cellcolor{green!20}H           \\ \hline
		\cellcolor{RoyalBlue!20}\textbf{Hard 16}          & \cellcolor{red!20}S           & \cellcolor{red!20}S           & \cellcolor{red!20}S           & \cellcolor{red!20}S           & \cellcolor{red!20}S           & \cellcolor{red!20}S*          & \cellcolor{green!20}H           & \cellcolor{green!20}H           & \cellcolor{green!20}H           \\ \hline
		\cellcolor{RoyalBlue!20}\textbf{Hard 17}          & \cellcolor{red!20}S           & \cellcolor{red!20}S           & \cellcolor{red!20}S           & \cellcolor{red!20}S           & \cellcolor{red!20}S           & \cellcolor{red!20}S           & \cellcolor{green!20}H           & \cellcolor{green!20}H           & \cellcolor{green!20}H           \\ \hline
		\cellcolor{RoyalBlue!20}\textbf{Hard and Soft 18} & \cellcolor{red!20}S           & \cellcolor{red!20}S           & \cellcolor{red!20}S           & \cellcolor{red!20}S           & \cellcolor{red!20}S           & \cellcolor{red!20}S           & \cellcolor{red!20}S           & \cellcolor{green!20}H           & \cellcolor{green!20}H           \\ \hline
		\cellcolor{RoyalBlue!20}\textbf{Hard and Soft 19} & \cellcolor{red!20}S           & \cellcolor{red!20}S           & \cellcolor{red!20}S           & \cellcolor{red!20}S           & \cellcolor{red!20}S           & \cellcolor{red!20}S           & \cellcolor{red!20}S           & \cellcolor{red!20}S           & \cellcolor{green!20}H           \\ \hline
		\cellcolor{RoyalBlue!20}\textbf{Hard and Soft 20} & \cellcolor{red!20}S           & \cellcolor{red!20}S           & \cellcolor{red!20}S           & \cellcolor{red!20}S           & \cellcolor{red!20}S           & \cellcolor{red!20}S           & \cellcolor{red!20}S           & \cellcolor{red!20}S           & \cellcolor{red!20}S           \\ \hline
		\cellcolor{RoyalBlue!20}\textbf{Soft 12 - 15}     & \cellcolor{green!20}H           & \cellcolor{green!20}H           & \cellcolor{green!20}H           & \cellcolor{green!20}H           & \cellcolor{green!20}H           & \cellcolor{green!20}H           & \cellcolor{green!20}H           & \cellcolor{green!20}H           & \cellcolor{green!20}H           \\ \hline
		\cellcolor{RoyalBlue!20}\textbf{Soft 16}          & \cellcolor{green!20}H           & \cellcolor{red!20}S           & \cellcolor{red!20}S           & \cellcolor{red!20}S           & \cellcolor{green!20}H           & \cellcolor{green!20}H           & \cellcolor{green!20}H           & \cellcolor{green!20}H           & \cellcolor{green!20}H           \\ \hline
		\cellcolor{RoyalBlue!20}\textbf{Soft 17}          & \cellcolor{red!20}S           & \cellcolor{red!20}S           & \cellcolor{red!20}S           & \cellcolor{red!20}S           & \cellcolor{red!20}S           & \cellcolor{red!20}S           & \cellcolor{green!20}H           & \cellcolor{green!20}H           & \cellcolor{green!20}H           \\ \hline
	\end{tabular}\caption{Basic strategy for the two up-card variation (single deck) \textit{for soft dealer totals only}.  The $^*$ signifies that the optimal decision depends on the specific layout given this hand total.}
\label{table:BS2UCSoft}
\end{table}

\subsection{One Up-Card}
The following table give the basic strategy for the one up-card variation played using one deck.  

\begin{table}[h]
	\setlength{\arrayrulewidth}{0.5mm}
	\begin{tabular}{|l|c|c|c|c|c|c|c|c|c|c|}
		\hline
		\rowcolor{RoyalBlue!20}
		\backslashbox{\textbf{\textbf{Player}}}{\textbf{Dealer}} & \textbf{2} & \textbf{3} & \textbf{4} & \textbf{5} & \textbf{6} & \textbf{7} & \textbf{8} & \textbf{9} & \textbf{10} & \textbf{11} \\ \hline
		\cellcolor{RoyalBlue!20} \textbf{Hard 4 - 12}   & \cellcolor{green!20}H        & \cellcolor{green!20}H        & \cellcolor{green!20}H       & \cellcolor{green!20}H       & \cellcolor{green!20}H       & \cellcolor{green!20}H        & \cellcolor{green!20}H        & \cellcolor{green!20}H        & \cellcolor{green!20}H        & \cellcolor{green!20}H         \\ \hline
		\cellcolor{RoyalBlue!20} \textbf{Hard 13}                 & \cellcolor{green!20}H        & \cellcolor{green!20}H        &  \cellcolor{green!20}H       & \cellcolor{green!20}H       & \cellcolor{green!20}H*       & \cellcolor{green!20}H       & \cellcolor{green!20}H        & \cellcolor{green!20}H        & \cellcolor{green!20}H         & \cellcolor{green!20}H         \\ \hline
		\cellcolor{RoyalBlue!20} \textbf{Hard 14}                 & \cellcolor{red!20}S*     & \cellcolor{green!20}H*       & \cellcolor{green!20}H*       & \cellcolor{red!20}S      & \cellcolor{red!20}S      & \cellcolor{green!20}H        & \cellcolor{green!20}H        & \cellcolor{green!20}H        & \cellcolor{green!20}H         & \cellcolor{green!20}H         \\ \hline
		\cellcolor{RoyalBlue!20} \textbf{Hard 15}                 & \cellcolor{red!20}S      & \cellcolor{red!20}S      & \cellcolor{red!20}S      & \cellcolor{red!20}S      & \cellcolor{red!20}S      & \cellcolor{green!20}H        & \cellcolor{green!20}H        & \cellcolor{green!20}H        & \cellcolor{green!20}H         & \cellcolor{green!20}H         \\ \hline
		\cellcolor{RoyalBlue!20} \textbf{Hard 16}                 & \cellcolor{red!20}S      & \cellcolor{red!20}S      & \cellcolor{red!20}S      & \cellcolor{red!20}S      & \cellcolor{red!20}S      & \cellcolor{green!20}H      & \cellcolor{green!20}H      & \cellcolor{green!20}H*       & \cellcolor{green!20}H*        & \cellcolor{red!20}S       \\ \hline
		\cellcolor{RoyalBlue!20} \textbf{Hard 17 - 20} & \cellcolor{red!20}S      & \cellcolor{red!20}S      & \cellcolor{red!20}S      & \cellcolor{red!20}S      & \cellcolor{red!20}S      & \cellcolor{red!20}S      & \cellcolor{red!20}S      & \cellcolor{red!20}S      & \cellcolor{red!20}S       & \cellcolor{red!20}S       \\ \hline
		\cellcolor{RoyalBlue!20} \textbf{Soft 12 - 16}                & \cellcolor{green!20}H        & \cellcolor{green!20}H        & \cellcolor{green!20}H        & \cellcolor{green!20}H        & \cellcolor{green!20}H        & \cellcolor{green!20}H        & \cellcolor{green!20}H        & \cellcolor{green!20}H        & \cellcolor{green!20}H         & \cellcolor{green!20}H         \\ \hline
		\cellcolor{RoyalBlue!20} \textbf{Soft 17}                & \cellcolor{red!20}S        & \cellcolor{red!20}S        & \cellcolor{red!20}S        & \cellcolor{green!20}H        & \cellcolor{green!20}H        & \cellcolor{green!20}H        & \cellcolor{green!20}H        & \cellcolor{green!20}H        & \cellcolor{green!20}H         & \cellcolor{green!20}H         \\ \hline
		\cellcolor{RoyalBlue!20} \textbf{Soft 18}  & \cellcolor{red!20}S      & \cellcolor{red!20}S      & \cellcolor{red!20}S      & \cellcolor{red!20}S      & \cellcolor{red!20}S      & \cellcolor{red!20}S      & \cellcolor{red!20}S      & \cellcolor{green!20}H      & \cellcolor{red!20}S       & \cellcolor{red!20}S       \\ \hline 
		\cellcolor{RoyalBlue!20} \textbf{Soft 18 - 20}    & \cellcolor{red!20}S      & \cellcolor{red!20}S      & \cellcolor{red!20}S      & \cellcolor{red!20}S      & \cellcolor{red!20}S      & \cellcolor{red!20}S      & \cellcolor{red!20}S      & \cellcolor{red!20}S      & \cellcolor{red!20}S       & \cellcolor{red!20}S       \\ \hline
	\end{tabular}
	\caption{Basic strategy for the one up-card variation (single deck).  The $^*$ signifies that the optimal decision depends on the specific layout given this hand total.}
	\label{table:BS1UC}
\end{table}

\subsection{No Up-Card}
The following table give the basic strategy for the no up-card variation played using one deck. 
\begin{table}[h] 
	\setlength{\arrayrulewidth}{0.5mm}
	\begin{tabular}{|l|c|}
		\hline
		\rowcolor{RoyalBlue!20}
		\textbf{Player Total} & \textbf{Decision} \\ \hline
		\cellcolor{RoyalBlue!20}\textbf{Hard 2 - 14}         &  \cellcolor{green!20}H              \\ \hline
		\cellcolor{RoyalBlue!20}\textbf{Hard 15 - 20}        & \cellcolor{red!20}S             \\ \hline
		\cellcolor{RoyalBlue!20}\textbf{Soft 12 - 17}        & \cellcolor{green!20}H              \\ \hline
		\cellcolor{RoyalBlue!20}\textbf{Soft 18 - 20}        & \cellcolor{red!20}S             \\ \hline
	\end{tabular}
	\caption{Basic strategy for the no up-card variation when using one deck with, and without, replacement.}
	\label{table:BSNUC}
\end{table}

\begin{remark}
	In Tables \ref{table:BS2UCHard} and \ref{table:BS2UCSoft}, the only changes to the basic strategy when we play using one deck with replacement are the following:
	\begin{itemize}
		\item Hard 15 v. Hard 10 becomes hit.
		\item Soft 17 v. Hard 10 becomes stand.
		\item Soft 16 v. Soft 13 becomes hit.
	\end{itemize}
	In Table \ref{table:BS1UC}, the following decisions change when we play with replacement:
	\begin{itemize}
		\item Hard 16 v. 7 and Hard 16 v. 10 both become stand.
		\item Soft 17 v. 5 becomes stand.
		\item Soft 18 v. 9 becomes stand.
	\end{itemize}
	Lastly, in Table \ref{table:BSNUC}, there are no changes to the basic strategy when we play with replacement. Additionally, all asterisks in the above tables are removed when playing with replacement.
\end{remark}

\printbibliography
\end{document}